\def\colorful{1}
\newif\ifhyper\IfFileExists{hyperref.sty}{\hypertrue}{\hyperfalse}
\ifhyper\usepackage{hyperref}\fi
\renewcommand{\section}{\@startsection{section}{1}{0pt}{-12pt}{5pt}{\large\bf}}
\renewcommand{\subsection}{\@startsection{subsection}{2}{0pt}{-12pt}{-5pt}{\normalsize\bf}}
\renewcommand{\subsubsection}{\@startsection{subsubsection}{3}{0pt}{-12pt}{-5pt}{\normalsize\bf}}
\def\nnewcolor{1}
\newtheorem{theorem}{Theorem}
\newtheorem{lemma}[theorem]{Lemma}
\newtheorem{proposition}[theorem]{Proposition}
\newtheorem{claim}[theorem]{Claim}
\newtheorem{fact}[theorem]{Fact}
\newtheorem{obs}[theorem]{Observation}
\theoremstyle{definition}
\newtheorem{definition}[theorem]{Definition}
\newcommand{\R}{\mathbb{R}}
\newcommand{\Z}{\mathbb{Z}}
\newcommand{\E}{\mathbb{E}}
\newcommand{\gs}{\geqslant}
\newcommand{\ls}{\leqslant}
\newcommand{\var}{\text{Var}}
\newcommand{\width}{\mathbf{width}}
\newcommand{\discr}{\mathbf{Discr}}
\newcommand{\ignore}[1]{}
\newcommand{\eps}{\epsilon}
\newcommand{\Ak}{\mathcal{A}_k}
\newcommand{\Var}{\mathop{\textnormal{Var}}\nolimits}
\newcommand{\Poi}{\mathop{\textnormal{Poi}}\nolimits}
\newcommand{\eqdef}{\stackrel{{\mathrm {\footnotesize def}}}{=}}
\newcommand{\littlesum}{\mathop{\textstyle \sum}}
\newenvironment{algorithm}[1][\  ] %
{ \rm
\begin{tabbing}
....\=.....\=.....\=.....\=.....\=  \+ \kill
} %
{\end{tabbing} }
\title{Optimal Algorithms and Lower Bounds for \\ Testing Closeness of Structured Distributions}
\author{
Ilias Diakonikolas\thanks{Supported by EPSRC grant EP/L021749/1 and a Marie Curie Career Integration grant.}\\
University of Edinburgh\\
{\tt ilias.d@ed.ac.uk}.\\
\and
Daniel M. Kane\\
University of California, San Diego\thanks{Some of this work was performed while visiting the University of Edinburgh.}\\
{\tt dakane@cs.ucsd.edu}.\\
\and
Vladimir Nikishkin\thanks{Supported by a University of Edinburgh PCD Scholarship.}\\
University of Edinburgh\\
{\tt v.nikishkin@sms.ed.ac.uk}.
}
\begin{document}

\maketitle

\thispagestyle{empty}

\begin{abstract}
We give a general unified method that can be used for $L_1$ {\em closeness testing} of a wide range of univariate structured distribution families.
More specifically, we design a sample optimal and computationally efficient algorithm for testing
the equivalence of two unknown (potentially arbitrary) univariate distributions under the $\mathcal{A}_k$-distance metric:
Given sample access to distributions with density functions $p, q: I \to \R$, we want to distinguish
between the cases that $p=q$ and $\|p-q\|_{\mathcal{A}_k} \ge \eps$ with probability at least $2/3$.
We show that for any $k \ge 2, \eps>0$, the {\em optimal} sample complexity of the  $\mathcal{A}_k$-closeness testing
problem is $\Theta(\max\{ k^{4/5}/\eps^{6/5}, k^{1/2}/\eps^2 \})$.
This is the first $o(k)$ sample algorithm for this problem, and yields
new, simple $L_1$ closeness testers, in most cases with optimal sample complexity,
for broad classes of structured distributions.
\end{abstract}

\thispagestyle{empty}
\setcounter{page}{0}

\newpage

\section{Introduction}  \label{sec:intro}

We study the problem of closeness testing (equivalence testing) between two unknown probability distributions.
Given independent samples from a pair of distributions $p, q$, we want to
determine whether the two distributions are the same versus significantly different.
This is a classical problem in statistical hypothesis testing~\cite{NeymanP, lehmann2005testing} that
has received considerable attention by the TCS community
in the framework of {\em property testing}~\cite{RS96, GGR98}:
given sample access to distributions $p, q$,
and a parameter $\eps>0$, we want to distinguish between the cases that $p$ and $q$ are identical versus
$\eps$-far from each other in $L_1$ norm (statistical distance).
Previous work on this problem focused on characterizing the sample size needed to test the identity of two arbitrary
distributions of a given support size~\cite{BFR+:00, CDVV14}. It is now known that the optimal sample complexity
(and running time) of this problem for distributions with support of size $n$ is $\Theta(\max\{n^{2/3}/\eps^{4/3}, n^{1/2}/\eps^2 \})$.

The aforementioned sample complexity characterizes worst-case instances,
and one might hope that drastically better results can be obtained for most natural settings,
in particular when the underlying distributions are known a priori to have some ``nice structure''.
In this work, we focus on the problem of testing closeness for {\em structured} distributions.
Let $\mathcal{C}$ be a family over univariate distributions.
The problem of {\em closeness testing for $\mathcal{C}$} is the following:
Given sample access to two unknown distribution $p, q \in \mathcal{C}$,
we want to distinguish between the case that $p = q$ versus $\|p-q\|_1 \ge \eps.$
Note that the sample complexity of this testing problem depends on the underlying class
$\mathcal{C}$, and we are interested in obtaining efficient algorithms that are {\em sample optimal} for $\mathcal{C}$.

We give a general algorithm that can be used for $L_1$ closeness testing of a wide range of structured distribution families.
More specifically, we give a sample optimal and computationally efficient algorithm for testing the identity of two unknown (potentially arbitrary)
distributions $p, q$ under a different metric between distributions -- the so called $\mathcal{A}_k$-distance (see Section~\ref{sec:results} for a formal definition).
Here, $k$ is a positive integer that intuitively captures the number of ``crossings'' between the probability density functions $p, q$.

Our main result (see Theorem~\ref{thm:main}) says the following:
{\em For any $k \in \Z_+, \eps>0$, and sample access to arbitrary univariate distributions $p, q$,
there exists a closeness testing algorithm under the $\mathcal{A}_k$-distance
using $O(\max\{ k^{4/5}/\eps^{6/5}, k^{1/2}/\eps^2 \})$ samples.
Moreover, this bound is information-theoretically optimal.}
We remark that our $\mathcal{A}_k$-testing algorithm applies to {\em any} pair of univariate distributions (over both continuous
and discrete domains). The main idea in using this general algorithm for testing closeness of structured distributions in $L_1$ distance is this:
if the underlying distributions $p, q$ belong to a structured distribution family $\mathcal{C}$, we can use the
$\mathcal{A}_k$-distance as a proxy for the $L_1$ distance (for an appropriate value of the parameter $k$),
and thus obtain an $L_1$ closeness tester for $\mathcal{C}$.

We note that $\mathcal{A}_k$-distance between distributions
has been recently used to obtain sample optimal efficient algorithms for {\em learning} structured distributions~\cite{CDSS14, ADLS15},
and for testing the identity of a structured distribution against an {\em explicitly known} distribution~\cite{DKN15a} (e.g., uniformity testing).
In both these settings, the sample complexity of the corresponding problem (learning/identity testing) with respect to the
$\mathcal{A}_k$-distance is identified with the sample complexity of the problem under the $L_1$ distance {\em for distributions of support $k$}.
More specifically,  the sample complexity of  learning an unknown univariate distribution (over a continuous or discrete domain) 
up to $\mathcal{A}_k$-distance $\eps$ is $\Theta(k/\eps^2)$~\cite{CDSS14} (independent of the domain size),
which is exactly the sample complexity of learning a {\em discrete distribution with support size $k$} up to $L_1$ error $\eps$. Similarly,
the sample complexity of uniformity testing of a univariate distribution 
(over a continuous or discrete domain)   up to $\mathcal{A}_k$-distance $\eps$ is  $\Theta(k^{1/2}/\eps^2)$~\cite{DKN15a} (again, independent of the domain size),
which is identical to the sample complexity of uniformity testing of a discrete distribution with support size $k$
up to $L_1$ error $\eps$~\cite{Paninski:08}.

Rather surprisingly, this analogy is {\em provably false} for the closeness testing problem: we prove that
the sample complexity of the $\mathcal{A}_k$ closeness testing problem  is $\Theta(\max\{ k^{4/5}/\eps^{6/5}, k^{1/2}/\eps^2 \})$, while
$L_1$ closeness testing between distributions of support $k$ can be achieved with $O(\max\{ k^{2/3}/\eps^{4/3}, k^{1/2}/\eps^2 \})$ samples~\cite{CDVV14}.
More specifically, our upper bound for $\mathcal{A}_k$ closeness testing problem applies for all univariate probability distributions (both continuous and discrete).
Our matching information--theoretic lower bound holds for continuous distributions, or discrete distributions of support size $n$ sufficiently large 
as a function of $k$, which is the most interesting regime for our applications. 

\vspace{-0.2cm}

\subsection{Related and Prior Work} \label{ssec:literature} In this subsection we review the related literature and compare
our results with previous work.

\smallskip

\noindent {\bf Distribution Property Testing} Testing properties of distributions~\cite{BFR+:00, Batu13}
has developed into a mature research area within theoretical computer science.
The paradigmatic problem in this field is the following: given sample access to one or more unknown probability distributions,
determine whether they satisfy some global property or are ``far'' from satisfying the property.
The goal is to obtain an algorithm for this task that is both statistically and computationally efficient, i.e., 
an algorithm with (information--theoretically) optimal sample size and polynomial runtime.
See~\cite{GR00, BFR+:00, BFFKRW:01, Batu01, BDKR:02, BKR:04,  Paninski:08, PV11sicomp, ValiantValiant:11,
DDSVV13, DJOP11, LRR11, ILR12, CDVV14, VV14, DKN15a} for a sample of works,
and~\cite{Rub12} for a survey.

\smallskip

\noindent {\bf Shape Restricted Estimation} 
Statistical estimation under shape restrictions -- i.e., 
inference about a probability distribution under the constraint 
that its probability density function satisfies certain qualitative properties --
is a classical topic in statistics~\cite{BBBB:72}.
Various structural restrictions have been studied in the literature, starting from
monotonicity, unimodality, convexity, and concavity~\cite{Grenander:56, Brunk:58, PrakasaRao:69, Wegman:70, HansonP:76, Groeneboom:85, Birge:87, Birge:87b,
Fougeres:97,ChanTong:04,JW:09},
and more recently focusing on structural restrictions such as log-concavity and $k$-monotonicity
\cite{BW07aos, DumbgenRufibach:09, BRW:09aos, GW09sc, BW10sn, KoenkerM:10aos, Walther09, DossW13, ChenSam13, KimSam14, BalDoss14, HW15}.
The reader is referred to~\cite{GJ:14} for a recent book on the topic.

\smallskip

\noindent {\bf Comparison with Prior Work}
Chan, Diakonikolas, Servedio, and Sun~\cite{CDSS14} proposed a general approach to $L_1$ learn univariate probability distributions
whose densities are well approximated by piecewise polynomials. They designed an efficient
agnostic learning algorithm for piecewise polynomial distributions, and as a corollary obtained efficient
learners for various families of structured distributions. 
The approach of \cite{CDSS14} uses the $\mathcal{A}_k$ distance metric between distributions,
but is otherwise orthogonal to ours.  Batu {\em et al.}~\cite{BKR:04} gave 
algorithms for closeness testing between two monotone distributions with sample complexity $O(\log^3 n).$
Subsequently, Daskalakis {\em et al.}~\cite{DDSVV13} improved and generalized this result 
to $t$-modal distributions, obtaining a closeness tester  with
sample complexity $O((t \log(n))^{2/3}/\eps^{{8}/3}+t^2/\eps^4)$. We remark that the approach of~\cite{DDSVV13}
inherently yields an algorithm with sample complexity $\Omega(t)$, which is sub-optimal.

The main ideas underlying this work are very different from those of ~\cite{DDSVV13} and~\cite{DKN15a}.
The approach of~\cite{DDSVV13} involves constructing an adaptive interval decomposition of the domain
followed by an application of a (known) closeness tester to the ``reduced'' distributions over those intervals.
This approach incurs an extraneous term in the sample complexity, that is needed
to construct the appropriate decomposition.
The approach of~\cite{DKN15a} considers several oblivious
interval decompositions of the domain (i.e., without drawing any samples) and applies
a ``reduced'' identity tester for each such decomposition. This idea yields sample--optimal bounds 
for ${\cal A}_k$ identity testing against a {\em known} distribution.
However, it crucially exploits the knowledge of the explicit distribution, 
and unfortunately fails in the setting where both distributions are unknown. 
We elaborate on these points in Section~\ref{ssec:techniques}.




\section{Our Results and Techniques} \label{sec:results}

\subsection{Basic Definitions} \label{ssec:defs}
We will use $p, q$ to denote the probability density functions (or probability mass functions)
of our distributions. If $p$ is discrete over support $[n]: = \{1, \ldots, n\}$, we denote
by $p_i$ the probability of element $i$ in the distribution.
For two discrete distributions $p, q$, their $L_1$ and $L_2$ distances are
$\|p -q \|_1 = \sum_{i=1}^n |p_i - q_i|$ and $\|p-q\|_2 = \sqrt{\sum_{i=1}^n (p_i - q_i)^2}$.
For $I \subseteq \R$ and density functions $p, q: I \to \R_+$, we have $\|p -q \|_1 = \int_I |p(x)-q(x)| dx$.

Fix a partition of the domain $I$ into disjoint intervals
$\mathcal{I} :=  (I_i)_{i=1}^{\ell}.$ For such a partition $\mathcal{I}$,
the {\em reduced distribution} $p_r^{\mathcal{I}}$ corresponding to $p$ and $\mathcal{I}$ is the discrete distribution over $[\ell]$
that assigns the $i$-th ``point'' the mass that $p$ assigns to the
interval $I_i$; i.e., for $i \in [\ell]$, $p_r^{\mathcal{I}} (i) = p(I_i)$.
Let $\mathfrak{J}_k$ be the collection
of all partitions of the domain $I$ into $k$ intervals. For $p, q: I \to \R_+$ and $k \in \Z_+$,
we define the $\mathcal{A}_k$-distance between $p$ and $q$ by
$$\|p-q\|_{\mathcal{A}_k} \eqdef \max_{\mathcal{I} = (I_i)_{i=1}^{k} \in \mathfrak{J}_k} \littlesum_{i=1}^k |p(I_i) - q(I_i)|
= \max_{\mathcal{I}  \in \mathfrak{J}_k} \| p_r^{\mathcal{I} } - q_r^{\mathcal{I} } \|_1.$$

\subsection{Our Results} \label{ssec:results}


Our main result is an optimal algorithm and a matching information--theoretic lower bound
for the problem of testing the equivalence between two unknown univariate distributions under the
$\mathcal{A}_k$ distance metric:
\begin{theorem}[Main] \label{thm:main}
Given $\eps>0$, an integer $k \ge 2$, and sample access to two distributions with probability density functions $p, q: [0, 1] \to \R_+$,
there is a computationally efficient algorithm which uses $O(\max\{ k^{4/5}/\eps^{6/5}, k^{1/2}/\eps^2 \})$ samples from $p, q$,
and with probability at least $2/3$ distinguishes whether $q = p$ versus $\|q-p\|_{{\cal A}_k} \ge \eps$.
Additionally, ${\Omega}(\max\{ k^{4/5}/\eps^{6/5}, k^{1/2}/\eps^2 \})$ samples are information-theoretically necessary for this task.
\end{theorem}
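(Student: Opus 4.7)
The plan is to prove the upper bound algorithmically and the lower bound via an information-theoretic construction.

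\textbf{Upper bound.} First, I would reduce the continuous closeness-testing problem to a discrete one via a data-dependent partition. Concretely, use an initial batch of $\tilde O(s)$ samples from both $p$ and $q$ to build a partition $\mathcal{J}$ of $[0,1]$ into $s$ intervals each carrying $(p+q)$-mass roughly $2/s$; taking $s = \Theta(k/\eps)$ guarantees that any $k$-interval partition witnessing $\|p-q\|_{\mathcal{A}_k} \ge \eps$ can be rounded to a sub-partition of $\mathcal{J}$ with total loss at most $O(\eps/2)$, so it suffices to test whether the reduced pair $p_r^\mathcal{J}, q_r^\mathcal{J}$ on $[s]$ is equal versus $\Omega(\eps)$-far in $\mathcal{A}_k$. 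Second, I would design a $\chi^2$-style statistic for the reduced problem that beats the naive $s^{2/3}/\eps^{4/3} = k^{2/3}/\eps^{2}$ cost of plain $L_1$-closeness. The intuition is that $\|p_r - q_r\|_{\mathcal{A}_k}$ is witnessed by at most $k$ of the $s$ buckets, so the effective dimension of the problem is $k$ rather than $s$; a statistic whose variance scales with $k$ and whose bias scales with $\eps^2$, optimized over a covering argument for the $\binom{s-1}{k-1}$ candidate $k$-partitions, should give the claimed $k^{4/5}/\eps^{6/5}$ sample complexity, with the $k^{1/2}/\eps^2$ term dominating in the large-$\eps$ regime in a Paninski-like fashion.

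\textbf{Lower bound.} The $\Omega(k^{1/2}/\eps^2)$ bound follows from a standard Paninski/Ingster construction: split the domain into $k$ equal intervals and, for each, independently and with probability $1/2$, perturb $p$ and $q$ in opposite directions by $\eps/k$; this yields $\mathcal{A}_k$-distance $\Omega(\eps)$ in the perturbed case, while the $\chi^2$-divergence between the Poissonized sample distributions in the two cases stays $O(1)$ for $n = o(k^{1/2}/\eps^2)$. For the $\Omega(k^{4/5}/\eps^{6/5})$ part I would use a two-scale construction on a much larger domain: $\Theta(k)$ ``heavy'' cells supporting random $\pm$-perturbations of a tuned magnitude (chosen so the induced $\mathcal{A}_k$-distance is $\Omega(\eps)$), interleaved with many ``light'' spacer cells whose role is to absorb probability mass without themselves contributing to $\mathcal{A}_k$. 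Computing the $\chi^2$-divergence between the perturbed and unperturbed Poissonized sample distributions and balancing cell widths against perturbation magnitudes should yield the matching lower bound.

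\textbf{Main obstacles.} The hardest part on the upper-bound side is the statistic in the second step: a direct reduction to $L_1$-closeness on support $k$ is infeasible because we do not know the optimal partition, the oblivious-partition technique of \cite{DKN15a} crucially relies on one of the two distributions being known, and the adaptive decomposition of \cite{DDSVV13} unavoidably incurs an additive $\Omega(k)$ cost. A genuinely new ``partition-oblivious'' statistic is required to bridge this gap. On the lower-bound side, the main subtlety is explaining the gap between $k^{4/5}/\eps^{6/5}$ for $\mathcal{A}_k$-closeness and the $k^{2/3}/\eps^{4/3}$ bound for $L_1$-closeness on support $k$; the construction must genuinely live on a support much larger than $k$ while still producing $\Omega(\eps)$ $\mathcal{A}_k$-distance, and proving that the chi-squared divergence remains bounded up to the right sample threshold requires careful parameter tuning across both scales.
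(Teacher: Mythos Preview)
Your upper-bound plan has a genuine gap in its very first step. Building a partition into $s = \Theta(k/\eps)$ equal-mass intervals already costs $\Omega(k/\eps)$ samples, and this exceeds the target $O(k^{4/5}/\eps^{6/5})$ throughout the entire regime $\eps \gg k^{-3/8}$ in which that term dominates: the ratio $(k/\eps)\,/\,(k^{4/5}/\eps^{6/5}) = (k\eps)^{1/5}$ is $\gg 1$ whenever $\eps \gg 1/k$. You correctly flag this as the main obstacle in your final paragraph, but the plan you wrote down does not circumvent it --- it \emph{is} the adaptive-decomposition scheme whose $\Omega(k)$ cost you call fatal. The paper avoids any size-$\Omega(k)$ partition altogether. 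It sorts the pooled $m = O(k^{4/5}/\eps^{6/5})$ samples and uses the statistic
\[
Z \;=\; \#(\text{consecutive same-source pairs}) \;-\; \#(\text{consecutive different-source pairs}).
\]
For soundness it introduces $f(t) = \Pr[\text{nearest sample left of }t\text{ is from }p] - \Pr[\cdots\text{from }q]$, derives the ODE $f' = m(p-q) - m(p+q)f$, and obtains the identity $\E[Z] = m\int_0^1 f^2(p+q)\,dt + O(1)$, from which $\E[Z] = \Omega(m^3\eps^3/k^2)$ follows by H\"older. This order-based statistic is exactly the ``partition-oblivious'' ingredient you asked for but did not supply. (For very small $\eps$ the paper supplements $Z$ with a second tester that \emph{does} partition into $O(m)$ random intervals and runs a near-uniform $\mathcal{A}_k$ tester, but $m$ there is the sample budget, not $k/\eps$.)

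Your lower-bound plan also misses the central difficulty. A $\chi^2$ argument on Poissonized bin counts bounds only the information in how many samples land in each cell; but an $\mathcal{A}_k$-tester over $[0,1]$ (or over $[N]$ with $N$ huge) sees actual sample \emph{positions}. Any construction achieving $\|p-q\|_{\mathcal{A}_k} = \Omega(\eps)$ with $O(k)$ perturbed regions must carry sub-cell structure, and nothing in your sketch explains why a position-aware tester cannot read it off; if instead $p$ and $q$ are each flat on every heavy cell, the problem collapses to $L_1$-closeness on support $k$ and you recover only $k^{2/3}/\eps^{4/3}$, not $k^{4/5}/\eps^{6/5}$. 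The paper handles this in two steps: first a hypergraph-Ramsey argument shows that without loss of generality the tester depends only on the \emph{ordering} of the pooled samples; second, it exhibits a $q$--$p$--$q$ ``mini-bucket'' pattern with the property that $\le 2$ samples from a mini-bucket have the same order-distribution as under $p=q$, so an order-based tester needs three samples in many mini-buckets to see any signal, and a mutual-information computation yields $I(X{:}Y) = O(m^5\eps^6/k^4)$. The paper explicitly notes that no explicit hard family for \emph{general} testers is known, so the Ramsey reduction is not a detour but the crux.
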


Note that Theorem~\ref{thm:main} applies to arbitrary univariate distributions (over both continuous and discrete domains). 
In particular, the sample complexity of the algorithm does not depend on the support size of the underlying distributions.
We believe that the notion of testing under the $\mathcal{A}_k$ distance is very natural, and well suited
for (arbitrary) continuous distributions, where the notion of $L_1$ testing is (provably) impossible.

As a corollary of Theorem~\ref{thm:main}, we obtain sample--optimal algorithms for the $L_1$ closeness testing of various
structured distribution families ${\mathcal{C}}$ in a unified way. 
The basic idea is to use the $\mathcal{A}_k$ distance as a ``proxy'' for the $L_1$ distance
for an appropriate value of $k$ that depends on ${\mathcal{C}}$ and $\eps$.
We have the following simple fact:

\begin{fact} \label{fact:simple}
For a univariate distribution family ${\mathcal{C}}$ and $\eps>0$, let $k= {k({\mathcal{C}}, \eps)}$ 
be the smallest integer such that for any $f_1, f_2 \in {\mathcal{C}}$ it holds that
$\|f_1-f_2\|_1 \le \|f_1-f_2\|_{{\mathcal A}_k} + \eps/2$. Then there exists an $L_1$ closeness testing algorithm 
for $\mathcal{C}$ using $O(\max\{ k^{4/5}/\eps^{6/5}, k^{1/2}/\eps^2 \})$ samples.
\end{fact}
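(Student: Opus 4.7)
The plan is to derive this fact as an immediate corollary of Theorem~\ref{thm:main}, using the hypothesized relationship between $L_1$ and $\mathcal{A}_k$ distances on $\mathcal{C}$ to reduce $L_1$ closeness testing to $\mathcal{A}_k$ closeness testing. Concretely, I would invoke the $\mathcal{A}_k$ closeness testing algorithm from Theorem~\ref{thm:main} with distance parameter $\eps/2$ and with $k = k(\mathcal{C}, \eps)$ as defined in the statement. The output of that invocation is declared to be the output of the $L_1$ closeness tester for $\mathcal{C}$.

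Correctness has two directions. For completeness, if $p = q$ then $\|p - q\|_{\mathcal{A}_k} = 0 < \eps/2$, so Theorem~\ref{thm:main} accepts with probability at least $2/3$. For soundness, suppose $p, q \in \mathcal{C}$ with $\|p - q\|_1 \geq \eps$. By the defining property of $k(\mathcal{C}, \eps)$ applied to $f_1 = p$, $f_2 = q$, we have
\[
\|p - q\|_{\mathcal{A}_k} \;\geq\; \|p - q\|_1 - \eps/2 \;\geq\; \eps - \eps/2 \;=\; \eps/2,
\]
so the $\mathcal{A}_k$ tester rejects with probability at least $2/3$. Thus the procedure is a valid $L_1$ closeness tester for $\mathcal{C}$.

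The sample complexity follows by plugging $\eps/2$ into the bound of Theorem~\ref{thm:main}: the number of samples used is
\[
O\!\left(\max\!\left\{\frac{k^{4/5}}{(\eps/2)^{6/5}},\; \frac{k^{1/2}}{(\eps/2)^2}\right\}\right) \;=\; O\!\left(\max\!\left\{\frac{k^{4/5}}{\eps^{6/5}},\; \frac{k^{1/2}}{\eps^2}\right\}\right),
\]
absorbing the constant factors arising from $(1/2)^{-6/5}$ and $(1/2)^{-2}$ into the $O(\cdot)$.

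There is essentially no obstacle: the entire content of the fact is the reduction, which is automatic once the $\mathcal{A}_k$-distance satisfies the stated one-sided approximation of $L_1$-distance on $\mathcal{C}$. The real work lies in (a) proving Theorem~\ref{thm:main}, and (b) bounding $k(\mathcal{C}, \eps)$ for specific structured families (e.g., $t$-piecewise-polynomial, $t$-modal, log-concave), which is done separately in downstream applications.
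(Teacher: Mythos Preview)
Your proposal is correct and matches the paper's own argument essentially verbatim: apply the $\mathcal{A}_k$ tester of Theorem~\ref{thm:main} with parameter $\eps' = \eps/2$, verify completeness trivially and soundness via the defining inequality $\|p-q\|_{\mathcal{A}_k} \ge \|p-q\|_1 - \eps/2 \ge \eps/2$, and absorb the constant from $\eps/2$ into the $O(\cdot)$ sample bound.
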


Indeed, given sample access to $q, p \in {\mathcal{C}}$,
we apply the $\mathcal{A}_k$-closeness testing algorithm of Theorem~\ref{thm:main} for the value of $k$ in the statement of the fact,
and error $\eps' = \eps/2$. If $q = p$, the algorithm will output ``YES'' with probability at least $2/3$. If $\|q - p\|_1 \ge \eps$, then by the condition of
Fact~\ref{fact:simple} we have that $\|q-p\|_{\mathcal{A}_k} \ge \eps'$, and the algorithm will output ``NO'' with probability at least $2/3$.

We remark that the value of $k$ in Fact~\ref{fact:simple}  is a natural complexity measure 
for the difference between two probability density functions in the class ${\mathcal{C}}$.
It follows from the definition of the ${\mathcal A}_k$ distance that this value corresponds 
to the number of ``essential'' crossings between $f_1$ and $f_2$ --  i.e.,
the number of crossings between the functions $f_1$ and $f_2$ 
that significantly affect their $L_1$ distance. Intuitively, the number of essential crossings -- as opposed
to the domain size -- is, in some sense, the ``right'' parameter to characterize 
the sample complexity of $L_1$ closeness testing for ${\mathcal{C}}$.

The upper bound implied by the above fact is information-theoretically optimal for a wide range of structured distribution classes ${\mathcal{C}}$.
In particular, our bounds apply to all the structured distribution families considered in
~\cite{CDSS14, DKN15a, ADLS15} including (arbitrary mixtures of) $t$-flat (i.e., piecewise constant with $t$ pieces), $t$-piecewise degree-$d$ polynomials,
$t$-monotone, monotone hazard rate, and  log-concave distributions.  For $t$-flat distributions we obtain an $L_1$ closeness testing algorithm
that uses $O(\max\{ t^{4/5}/\eps^{6/5}, t^{1/2}/\eps^2 \})$ samples, which is the first $o(t)$ sample algorithm for the problem.
For log-concave distributions, we obtain a sample size of $O(\eps^{-9/4})$ matching the information--theoretic lower bound even for the case
that one of the distributions is explicitly given~\cite{DKN15a}. Table~1 summarizes our upper bounds for 
a selection of natural and well-studied distribution families. These results are obtained from Theorem~\ref{thm:main} 
and Fact~\ref{fact:simple}, via the appropriate structural approximation results~\cite{CDSS13, CDSS14}.

\begin{table*}[t]
\begin{center}
\begin{tabular}{|c|c|c|c|}%
\hline \bf Distribution Family & \bf Our upper bound & \bf Previous work
\\\hline\hline

$t$-piecewise constant & $O\big(\max \big\{ \frac{t^{4/5}}{\eps^{6/5}}, \frac{t^{1/2}}{\eps^2} \big\}\big)$ & $O\big(\frac{t}{\eps^2}\big)$ \cite{CDSS14} \\ \hline

$t$-piecewise degree-$d$ & $O\big(\max \big\{ \frac{(t(d+1))^{4/5}}{\eps^{6/5}}, \frac{(t(d+1))^{1/2}}{\eps^2} \big\}\big)$ &  $O\big(\frac{t(d+1)}{\eps^2}\big)$  \cite{CDSS14}
\\\hline

log-concave & $O\big(\frac{1}{\eps^{9/4}}\big)$ & $O\big(\frac{1}{\eps^{5/2}}\big)$ \cite{CDSS14} \\ \hline

$k$-mixture of log-concave &$O\big(\max \big\{ \frac{k^{4/5}}{\eps^{8/5}}, \frac{k^{1/2}}{\eps^{9/4}} \big\}\big)$& $O\big(\frac{k}{\eps^{5/2}}\big)$ \cite{CDSS14}  \\ \hline

$t$-modal over $[n]$ &$O\big(\max \big\{ \frac{(t \log n)^{4/5}}{\eps^{2}}, \frac{(t \log n)^{1/2}}{\eps^{5/2}} \big\}\big)$&  $O\big(\frac{(t \log n)^{2/3}}{\eps^{{8}/3}}+ \frac{t^2}{\eps^4}\big)$ \cite{DDSVV13}   \\\hline


MHR over $[n]$ &$O(\max \big\{ \frac{\log(n/\eps)^{4/5}}{\eps^{2}}, \frac{\log(n/\eps)^{1/2}}{\eps^{5/2}} \big\})$ &  $O\big(\frac{\log(n/\eps)}{\eps^{3}}\big)$  \cite{CDSS14} \\\hline

\end{tabular}
\end{center}
\label{table:results}
\caption{Algorithmic results for closeness testing of selected families of structured probability distributions. 
The second column indicates the sample complexity of our general algorithm applied to the class under consideration.
The third column indicates the sample complexity of the best previously known algorithm for the same problem.
}
\label{tab:results}
\end{table*}

We would like to stress that our algorithm and its analysis are
very different than previous results in the property testing literature.
We elaborate on this point in the following subsection.

\subsection{Our Techniques} \label{ssec:techniques}
In this subsection, we provide a high-level overview of our techniques in tandem with a comparison to prior work.

Our upper bound is achieved by an explicit, sample near-linear-time algorithm. 
A good starting point for considering this problem would be the testing algorithm of~\cite{DKN15a},
which deals with the case where $p$ is an {\em explicitly known} distribution. The basic idea of the testing algorithm in this case~\cite{DKN15a}
is to partition the domain into intervals in several different ways, and run a known $L_2$ tester on the reduced distributions 
(with respect to the intervals in the partition) as a black-box. 
At a high-level, these intervals partitions can be constructed by exploiting 
our knowledge of $p$, in order to divide our domain into several 
equal mass intervals under $p$. 
It can be shown that if $p$ and $q$ have large $\mathcal{A}_k$ distance from each other,
one of these partitions will be able to detect the difference.

Generalizing this algorithm to the case where $p$ is {\em unknown} turns out to be 
challenging, because there seems to be no way to find the appropriate interval partitions with $o(k)$ samples.
If we allowed ourselves to take $\Omega(k/\epsilon)$ samples from $p$, 
we would be able to approximate an appropriate interval partition, 
and make the aforementioned approach go through.
Alas, this would not lead to an $o(k)$ sample algorithm.
If we can only draw $m$ samples from our distributions, the best that we could hope to do
would be to use our samples in order to partition the domain into $m+1$ interval regions.
This, of course, is not going to be sufficient to allow an analysis 
along the lines of the above approach to work. 
In particular, if we partition our domain \emph{deterministically} into $m=o(k)$ intervals, 
it may well be the case that the reduced distributions over those intervals 
are identical, 
despite the fact that the original distributions have large $\mathcal{A}_k$ distance. 
In essence, the differences between $p$ and $q$ 
may well cancel each other out on the chosen intervals. 

However, it is important to note that our interval boundaries are \emph{not} deterministic. 
This suggests that unless we get unlucky, the discrepancy between 
$p$ and $q$ will not actually cancel out in our partition. 
As a slight modification of this idea, instead of
partitioning the domain into intervals (which we expect to have only $O(1)$ samples each) 
and comparing the number of samples from $p$ versus $q$ in each,
we sort our samples and test how many of them came 
from the same distribution as their neighbors (with respect to the natural ordering on the real line).

We intuitively expect that, if $p = q$, the number of pairs of ordered samples drawn from the same distribution
versus a different one will be the same. Indeed, this can be formalized and the completeness
of this tester is simple to establish. The soundness analysis, however, is somewhat involved. 
We need to show that the expected value of the statistic that we compute 
is larger than its standard deviation. While the variance is easy to bound from above, bounding the expectation is quite challenging. 
To do so, we define a function, $f(t)$, 
that encodes how likely it is that the samples nearby point $t$ 
come from one distribution or the other. 
It turns out that $f$ satisfies a relatively nice differential equation, 
and relates in a clean way to the expectation of our statistic. 
From this, we can show that any discrepancy between $p$ and $q$ 
taking place on a scale too short to be detected by {the above partitioning approach} 
will yield a notable contribution to our expectation.

The analysis of our lower bound begins by considering a natural class of testers, 
namely those that take some number of samples from $p$ and $q$, 
sort the samples (while keeping track of which distribution they came from) 
and return an output that depends only on the ordering of these samples. 
For such testers we exhibit explicit families of pairs of distributions that are 
hard to distinguish from being identical. There is a particular pattern that appears 
many times in these examples, where there is a small interval for which $q$ has 
an appropriate amount of probability mass, followed by an interval of $p$, followed 
by another interval of $q$. When the parameters are balanced correctly, it can be 
shown that when at most two samples are drawn from this subinterval, 
the distribution on their orderings is indistinguishable from the case where $p=q$. 
By constructing distributions with many copies of the pattern, we essentially show 
that a tester of this form will not be able to be confident that $p\neq q$, 
unless there are many of these small intervals from which it draws three or more samples. 
On the other hand, a simple argument shows that this is unlikely to be the case.

The above lower bound provides explicit distributions that are hard to distinguish 
from being identical by any tester in this limited class. To prove a lower bound against 
general testers, we proceed via a reduction: we show that an order--based tester 
can be derived from any general tester. It should be noted that this makes our lower bound 
in a sense non-constructive, as we do not know of any explicit families of distributions 
that are hard to distinguish from uniform for general testers.  In order to perform this reduction, 
we show that for a general tester we can find some large subset $S$ of its domain 
such that if all samples drawn from $p$ and $q$ by the tester happen to lie in $S$, 
then the output of the tester will depend only on the ordering of the samples. 
This essentially amounts to a standard result from Ramsey theory. 
Then, by taking any other problem, we can embed it into our new sample space 
by choosing new $p$ and $q$ that are the same up to an order-preserving rearrangement 
of the domain (which will also preserve $\mathcal{A}_k$ distance), ensuring that they are supported only on $S$.

\section{Algorithm for $\mathcal{A}_k$ Closeness Testing} \label{sec:alg}

In this section we provide the sample optimal closeness tester under the $\Ak$ distance.

\subsection{An $O(k^{4/5}/\eps^{6/5})$-sample tester} \label{ssec:simple-alg}

In this subsection we give a tester with sample complexity $O(k^{4/5}/\eps^{6/5})$
that applies for $\eps = \Omega(k^{-1/6})$.
For simplicity, we focus on the case that we take samples
from two unknown distributions with probability density functions $p, q: [0, 1] \to \R_+$.
Our results are easily seen to extend to discrete probability distributions.

\medskip

\fbox{\parbox{6in}{
{\bf Algorithm} Simple-Test-Identity-$\mathcal{A}_k(p, q, \eps)$\\
Input: sample access to pdf's $p, q: [0, 1] \to \R_+$, $k \in \Z_+$, and $\eps > 0$.\\
Output: ``YES'' if $q = p$; ``NO'' if $\|q-p\|_{\mathcal{A}_k} \ge \eps.$


\begin{enumerate}

\item Let $m = C\cdot(k^{4/5}/\eps^{6/5})$, for a sufficiently large constant $C$.
Draw two sets of samples $S_p$, $S_q$ each of size $\Poi(m)$ from $p$ and from $q$ respectively.


\item Merge $S_p$ and $S_q$ while remembering from which distribution each sample comes from.
Let $S$ be the union of $S_p$ and $S_q$ sorted in increasing order (breaking ties randomly).


\item Compute the statistic $Z$ defined as follows:
\begin{eqnarray*}
Z \eqdef &\#&(\mbox{pairs of successive samples in $S$ coming from the same distribution}) -\\
&\#&(\mbox{pairs of successive samples in $S$ coming from different distributions})
\end{eqnarray*}


\item If $Z > 3\cdot(\sqrt{m})$ return "NO". Otherwise return "YES".

\end{enumerate}
}}

\begin{proposition} \label{prop:ak-simple}
The algorithm {\em Simple-Test-Identity-}$\mathcal{A}_k(p, q, \eps)$, on input two samples each of size $O(k^{4/5}/\eps^{6/5})$
drawn from two distributions with densities $p, q: [0,1] \to \R_+$, an integer $k > 2$,
and $\eps = \Omega (k^{-1/6})$, correctly distinguishes the case that $q=p$ from the case
$\lVert p-q \rVert_{\mathcal{A}_k} \ge \eps$, with probability at least $2/3$.
\end{proposition}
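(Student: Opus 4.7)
The plan is to (i) use Poissonization to rewrite $Z$ as a functional of a marked Poisson point process, (ii) compute $E[Z]$ via the Palm formula and a natural change of variables, and (iii) establish a quadratic--form inequality that lower bounds $E[Z]$ in terms of $\|p-q\|_{\mathcal{A}_k}$. Since we draw $\Poi(m)$ samples from each distribution independently, the merged set $S$ is a Poisson point process on $[0,1]$ of intensity $m(p+q)$ whose points carry independent labels $p$ or $q$ with probabilities $p/(p+q)$ and $q/(p+q)$. Writing $\sigma_i\in\{\pm 1\}$ for the label of the $i$-th sorted sample and $\phi:=(p-q)/(p+q)$, we have $Z=\sum_{i=1}^{|S|-1}\sigma_i\sigma_{i+1}$, and conditioning on positions yields $E[\sigma_i\sigma_{i+1}\mid X_i,X_{i+1}]=\phi(X_i)\phi(X_{i+1})$. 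Since every sample affects at most two summands, a standard Efron--Stein/Poisson concentration argument gives $\Var(Z)=O(m)$; when $p=q$, combining this with $E[Z]=0$ yields $|Z|\le 3\sqrt m$ with probability $\ge 2/3$, establishing completeness.

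For soundness, the Campbell--Palm formula for a consecutive pair of a Poisson process gives
\[ E[Z] \;=\; m^2\!\int\!\!\int_{x_1<x_2}(p-q)(x_1)(p-q)(x_2)\,\exp\!\Bigl(-m\!\int_{x_1}^{x_2}(p+q)\Bigr)dx_1\,dx_2. \]
Performing the change of variables $u:=m\!\int_0^x(p+q)(y)\,dy$ turns the merged process into a unit-rate Poisson process on $[0,U]$ with $U=2m$, and since $(p-q)(x)\,dx=\phi(u)\,du/m$ this reduces to
\[ E[Z] \;=\; \int\!\!\int_{0\le u_1<u_2\le U}\phi(u_1)\phi(u_2)\,e^{-(u_2-u_1)}\,du_1\,du_2. \]
Observe that this is a positive semidefinite quadratic form in $\phi$, essentially the squared $H^{-1}$-norm. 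The same substitution converts the hypothesis $\|p-q\|_{\mathcal{A}_k}\ge\eps$ into $\|\phi\|_{\mathcal{A}_k}\ge m\eps$ on a function $\phi:[0,U]\to[-1,1]$.

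The crux is then the analytic lemma: for any $\phi:[0,U]\to[-1,1]$ with $\|\phi\|_{\mathcal{A}_k}\ge M$,
\[ \int\!\!\int_{u_1<u_2}\phi(u_1)\phi(u_2)\,e^{-(u_2-u_1)}\,du_1\,du_2 \;\gtrsim\; M^3/k^2. \]
Plugging in $M=m\eps$ yields $E[Z]\gtrsim m^3\eps^3/k^2$, which exceeds a large constant multiple of $\sqrt{m}$ precisely when $m\gtrsim k^{4/5}/\eps^{6/5}$; Chebyshev then delivers the required probability-$2/3$ separation. To prove this lemma I would introduce the auxiliary function $G(u):=\int_u^U \phi(v)\,e^{-(v-u)}\,dv$, which solves the linear ODE $G'=G-\phi$ with boundary condition $G(U)=0$ and satisfies $E[Z]=\int_0^U \phi(u)\,G(u)\,du$; this is precisely the function $f(t)$ alluded to in the techniques overview. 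The main obstacle, and the extremal configuration for the lemma, is when the $\mathcal{A}_k$-witnessing partition is made of $k$ short intervals of length $\ell\approx M/k$ paired as $k/2$ adjacent positive--negative ``dipoles,'' possibly separated by zero regions. On each such dipole the exponential kernel is nearly constant, so the zeroth-order contribution cancels and one must carefully extract the subleading $O(\ell^3)$ term before summing over the $k/2$ dipoles to recover $k\cdot\ell^3\sim M^3/k^2$. Showing that this dipole pattern is essentially extremal---for instance via a dyadic decomposition of length-scales, or equivalently through the Fourier identity $E[Z]=(1/(2\pi))\int|\widehat\phi(\omega)|^2/(1+\omega^2)\,d\omega$ paired with a careful interpolation estimate---is the main analytic content of the proof and the step that ultimately forces the $k^{4/5}/\eps^{6/5}$ threshold.
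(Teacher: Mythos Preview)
Your reduction is correct and runs parallel to the paper's: the Palm-formula computation of $\E[Z]$ and the change of variables $u=m\int_0^x(p+q)$ are a slicker route to the same bilinear form the paper obtains via Riemann sums, and your $G$ is (up to time reversal) exactly the paper's $f$. The variance bound and completeness are fine.

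The genuine gap is the proof of your analytic lemma. You correctly state the target $M^3/k^2$, identify the dipole configuration, and write down the Fourier identity, but none of this is a proof for general $\phi$; ``dyadic decomposition of length-scales'' and ``careful interpolation estimate'' are placeholders. The paper closes the gap with one step you did not take: integrating the ODE $f'=\phi-f$ against $f$ itself yields the energy identity
\[
\E[Z]\;=\;\int_0^U \phi f \;=\; \int_0^U f^2 \;+\; \tfrac12 f(U)^2,
\]
which converts the bilinear form into a manifestly nonnegative \emph{local} quantity. From there the argument is short and elementary: on any interval $J$ of $u$-length $<1$ with $\bigl|\int_J\phi\bigr|=D$, the ODE forces $|f|\ge D/3$ somewhere in $J$ (otherwise $|f(b)-f(a)|<2D/3$ while also $|f(b)-f(a)|\ge\bigl|\int_J\phi\bigr|-\int_J|f|>2D/3$), and since $f$ varies slowly one gets $|f|\ge D/9$ on a sub-interval of length $\Theta(D)$, hence $\int_J f^2\gtrsim D^3$. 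Refining the $\mathcal{A}_k$ partition so every piece has $u$-length $<1$ (this is exactly where the hypothesis $\eps=\Omega(k^{-1/6})$, i.e.\ $m=O(k)$, enters), summing, and applying H\"older to $\sum_I D_I^3\ge(\sum_I D_I)^3/(3k)^2$ gives $\E[Z]\gtrsim M^3/k^2$. Your Fourier identity is in fact Plancherel for $G$ and is equivalent to this energy identity, so you had the right object in hand; what you were missing is that its \emph{locality} lets you extract a per-interval $D^3$ contribution directly, with no extremality or interpolation argument required.
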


\begin{proof}
First, it is straightforward to verify the claimed sample complexity, since the algorithm only draws samples in Step~1.
To simplify the analysis we make essential use of the following simple claim:
\begin{claim}
We can assume without loss of generality that the pdf's $p, q : [0, 1] \to \R_+$ are continuous functions bounded from above by $2$.
\end{claim}
\begin{proof}
We start by showing we can assume that $p, q$ are at most $2$.
Let $p, q : [0, 1] \to \R_+$ be arbitrary pdf's.
We consider the cumulative distribution function (CDF)  $\Phi$ of the mixture $(p+q)/2$.
Let $X \sim p$, $Y \sim q$, $W \sim (p+q)/2$ be random variables.
Since $\Phi$ is non-decreasing, replacing $X$ and $Y$ by $\Phi(X)$ and $\Phi(Y)$
does not affect the algorithm (as the ordering on the samples remains the same).
We claim that, after making this replacement, $\Phi(X)$ and $\Phi(Y)$ are continuous distributions
with probability density functions bounded by $2$.
In fact, we will show that the sum of their probability density functions is exactly $2$.
This is because for any $0\leqslant a \leqslant b\leqslant 1$,
$$
\Pr[\Phi(X)\in [a,b]] + \Pr[\Phi(Y)\in[a,b]]=2 \Pr[\Phi(W) \in [a,b]] = 2(b-a) \;,
$$
where the second equality is by the definition of a CDF. Thus, we can assume that $p$ and $q$ are bounded from above by $2$.

To show that we can assume continuity, note that $p$ and $q$ can be approximated by continuous density functions
$p'$ and $q'$ so that the $L_1$ errors $\|p-p'\|_1,\|q-q'\|_1$ are each at most $1/(10m)$.
If our algorithm succeeds with the continuous densities $p'$ and $q'$, it must also succeed for $p$ and $q$.
Indeed, since the $L_1$ distance between $p$ and $p'$ and $q$ and $q'$ is at most $1/(10m)$,  a set of
$m$ samples taken from $p$ or $q$ are statistically indistinguishable to $m$ samples taken from $p'$ or $q'$.
This proves that it is no loss of generality to assume that $p$ and $q$ are continuous.
\end{proof}

Note that the algorithm makes use of the well-known ``Poissonization'' approach.
Namely, instead of drawing $m=O(k^{4/5}/\eps^{6/5})$ samples from $p$ and from $q$,
we draw $m'=\Poi(m)$ samples from $p$ and $m''=\Poi(m)$ sample from $q$.
The crucial properties of the Poisson distribution are that it is sharply concentrated around its mean
and it makes the number of times different elements occur in the sample independent.


We now establish completeness. Note that our algorithm draws $\Poi(2m)$ samples from $p$ or $q$.
If $p=q$, then our process equivalently selects $\Poi(2m)$ values from $p$ and then randomly and independently
with equal probability decides whether or not each sample came from $p$ or from $q$.
Making these decisions one at a time in increasing order of points, we note that each adjacent pair of elements in
$S$ randomly and independently contributes either a $+1$ or a $-1$ to $Z$.
Therefore, the distribution of $Z$ is exactly that of a sum of $\Poi(2m)-1$ independent $\{\pm 1\}$ random variables.
Therefore, $Z$ has mean $0$ and variance $2m-1$. By Chebyshev's inequality it follows that $|Z|\leqslant 3\sqrt{m}$ with probability at least $7/9$.
This proves completeness.

We now proceed to prove the soundness of our algorithm.
Assuming that $\lVert p-q \rVert_{\mathcal{A}_k} > \eps$,
we want to show that the value of $Z$ is at most $3 \cdot \sqrt{m}$ with probability at most $1/3$.
To prove this statement, we will again use Chebyshev's inequality.
In this case it suffices to show that $\E[Z] \gg \sqrt{\Var[Z]}+\sqrt{m}$ for the inequality to be applicable.
We begin with an important definition.
\begin{definition}
Let $f: [0,1] \to [-1, 1]$ equal
\begin{align*}
f(t) \eqdef  \ \ \ &
\Pr\left[\textrm{largest sample in }S\textrm{ that is at most }t\textrm{ was drawn from }p \right] \\ -
&\Pr\left[\textrm{largest sample in }S\textrm{ that is at most }t\textrm{ was drawn from }q \right] \; .
\end{align*}
\end{definition}
\noindent The importance of this function is demonstrated by the following lemma.
\begin{lemma}\label{EZLem}
We have that: $\E[Z] = m\int_0^1 f(t)(p(t) - q(t))dt \;.$
\end{lemma}
\begin{proof}
{Given an interval $I$, we let $Z_I$ be the contribution to $Z$ coming from pairs of consecutive points of $S$ 
the larger of which is drawn from $I$. We wish to approximate the expectation of $Z_I$. 
We let $\tau(I) = m(p(I)+q(I))$ be the expected total number of points drawn from $I$. 
We note that the contribution coming from cases where more than one point is drawn from $I$ is $O(\tau(I)^2)$. 
We next consider the contribution under the condition that only one sample is drawn from $I$. For this, 
we let $\mathrm{EP}_I$ and $\mathrm{EQ}_I$ be the events that the largest element of $S$ preceding $I$ comes from $p$ or $q$ respectively. 
We have that the expected contribution to $Z_I$ coming from events where exactly one element of $S$ is drawn from $I$ is:
\begin{align*}
& (\Pr[\mathrm{EP}_I]-\Pr[\mathrm{QP}_I])\Pr(\textrm{The only element drawn from }I\textrm{ is from }p) \\ - & (\Pr[\mathrm{EP}_I]-\Pr[\mathrm{QP}_I])\Pr(\textrm{The only element drawn from }I\textrm{ is from }q).
\end{align*}
Letting $x_I$ be the left endpoint of $I$, this is
$$
f(x_I)(mp(I)-mq(I))+O(\tau(I)^2).
$$
Therefore,
$$
\E[Z_I] = f(x_I)(mp(I)-mq(I))+O(\tau(I)^2).
$$
Letting $\mathcal{I}$ be a partition of our domain into intervals, we find that
\begin{align*}
\E[Z] & = \sum_{I\in\mathcal{I}} \E[Z_I]\\
& = \sum_{I\in\mathcal{I}} f(x_I)(mp(I)-mq(I))+O(\tau(I)^2)\\
& = O(m \max_{I\in\mathcal{I}} \tau(I)) + \sum_{I\in\mathcal{I}} f(x_I)(mp(I)-mq(I)).
\end{align*}
As the partition $\mathcal{I}$ becomes iteratively more refined, these sums approach Riemann sums for the integral of
$$
mf(x)(p(x)-q(x))dx.
$$
Therefore, taking a limit over partitions $\mathcal{I}$, we have that
$$
\E[Z] = m\int f(x)(p(x)-q(x)) dx.
$$

}
\end{proof}
We will also make essential use of the following technical lemma:
\begin{lemma} \label{lem:diff}
The function $f$ is differentiable with derivative
$ f'(t) = m\left( p(t) - q(t) - (p(t)+q(t))f(t) \right).$
\end{lemma}
\begin{proof}
Consider the difference between $f(t)$ and $f(t+h)$ for some small $h>0$.
We note that $f(t)=\E[F_t]$ where $F_t$ is $1$ if the sample of $S$ preceding $t$ came from $p$,
$-1$ if the sample came from $q$, and $0$ if no sample came before $t$. Note that
$$
F_{t+h} = \begin{cases} F_t & \textrm{ if no samples from }p\textrm{ nor }q\textrm{ are drawn from }[t,t+h]\\
1 & \textrm{ if one sample from }p\textrm{ and none from }q\textrm{ are drawn from }[t,t+h]\\
-1 & \textrm{ if one sample from }q\textrm{ and none from }p\textrm{ are drawn from }[t,t+h]\\
\pm1 & \textrm{ if at least two samples from }p\textrm{ or }q \textrm{ are drawn from }[t,t+h].\end{cases}
$$
Since $p$ and $q$ are continuous at $t \in [0, 1]$, these four events happen with probabilities $1-mh(p(t)+q(t))+o(h)$, $mhp(t)+o(h)$, $mhq(t)+o(h)$, $o(h)$, respectively. Therefore, taking an expectation we find that $f(t+h)=f(t)(1-mh(p(t)+q(t)))+mh(p(t)-q(t))+o(h)$. This, and a similar relation relating $f(t)$ to $f(t-h)$, proves that $f$ is differentiable with the desired derivative.
\end{proof}
To analyze the desired expectation, $\E[Z]$, we consider the quantity
$\int_0^1 f'(t)f(t)dt = (1/2) \left( f^2(1) - f^2(0)\right)$.
Substituting $f'$ from Lemma~\ref{lem:diff} above
gives $$\int_0^1 f'(t)f(t)dt = m\int_0^1f(t)(p(t)-q(t))dt\\ - m\int_0^1f^2(t)(p(t)+q(t))dt.$$
Combining this with Lemma \ref{EZLem},
we get
\begin{equation}\label{eq:exp}
\E[Z] = m\int_0^1 f^2(t)(p(t) + q(t))dt + f^2(1)/2 \; .
\end{equation}
The second term in \eqref{eq:exp} above is $O(1)$, so we focus our attention to bound the first term from below.
To do this, we consider intervals $I \subset [0,1]$
over which $|p(I)-q(I)|$ is ``large'' and show that they must produce some noticeable contribution to the first term.
Fix such an interval $I$.
We want to show that $f^2$ is large somewhere in $I$.
Intuitively, we attempt to prove that on at least one of the endpoints of the interval,
the value of $f$ is big. Since $f$ does not vary too rapidly, $f^2$ will be large on some large fraction of $I$.
Formally, we have the following lemma:


\begin{lemma}\label{largeEndLem}
For $\delta >0$,
let $I \subset [0, 1]$ be an interval with $| p(I) - q(I) | = \delta$ and $p(I)+q(I) < 1/m$.
Then, there exists an $x \in I $ such that $|f(x)| \geqslant \frac{m\delta}{3}.$
\end{lemma}
\begin{proof}
Suppose for the sake of contradiction that $|f(x)|<  m\delta/3$ for all $x\in I = [X,Y]$.
Then, we have that
\begin{eqnarray*}
2m\delta/3 &>& |f(X)-f(Y)| = \left|\int_X^Y f'(t)dt\right| = \left|\int_X^Y \left( m(p(t)-q(t)) -mf(t)(p(t)+q(t)) \right) dt\right| \\
&=& \left|m(p(I)-q(I)) -m\int_X^Y f(t)(p(t)+q(t))dt \right|
\geqslant m|p(I)-q(I)| -m \left|\int_X^Yf(t)(p(t)+q(t))dt \right|\\
&>& m\delta -m\int_X^Y \left(m\delta/3\right)(p(t)+q(t))dt = m \delta\left(1-m(p(I)+q(I))/3 \right) > 2m\delta/3 \;,
\end{eqnarray*}
which yields the desired contradiction.
\end{proof}






\noindent We are now able to show that the contribution to $\E[Z]$ coming from such an interval is large.
\begin{lemma}\label{ZContLem}
Let $I$ be an interval satisfying the hypotheses of Lemma \ref{largeEndLem}.
Then
$$\int_I f^2(t)(p(t)+q(t))dt  =  \Omega(m^2\delta^3) \;.$$
\end{lemma}
\begin{proof}
By Lemma \ref{largeEndLem},  $f$ is large at some point $x$ of the interval $I = [X, Y]$.
Without loss of generality, we assume that $p([X,x])+q([X,x]) \leqslant (p(I)+q(I))/2$.
Let $I'=[x,Y']$ be the interval so that $p(I')+q(I')=\delta/9$.
Note that $I'\subset I$ {(since by assumption $|p(I)-q(I)|>\delta$ and thus $p(I)+q(I)>\delta$)}. Furthermore, note that since with probability at least $1-m\delta/9$,
no samples from $S$ lie in $I'$, we have that for all $z$ in $I'$ it holds
$|f(x)-f(z)|\leqslant 2m\delta/9$, so $|f(z)|\geqslant m\delta/9$. Therefore,
\begin{eqnarray*}
\int_I f^2(t)(p(t)+q(t))dt &\geqslant& \int_{I'}f^2(t)(p(t)+q(t))dt \geqslant \int_{I'} \left(\frac{m\delta}{9}\right)^2 (p(t)+q(t))dt\\
&=& \frac{m^2\delta^2}{81}(p(I')+q(I')) = \frac{m^2\delta^3}{729} \;.
\end{eqnarray*}
\end{proof}
Since $\| p-q\|_{\mathcal{A}_k} >\eps$, there is a partition $\mathcal{I}$ of $[0,1]$
into $k$ intervals so that $\| p_r^{\mathcal{I}} - q_r^{\mathcal{I}}\|_1 >\eps.$
By subdividing intervals further if necessary, we can guarantee that $\mathcal{I}$ has at most $3k$ intervals,
$\|p_r^{\mathcal{I}} - q_r^{\mathcal{I}}\| >\eps,$ and for each subinterval $I \in \mathcal{I}$ it holds $p(I), q(I)\leqslant 1/k$.
For each such interval $I \in \mathcal{I}$, let $\delta_I=|p(I)-q(I)|$.
Note that $\sum_{I\in\mathcal{I}} \delta_I\geqslant \eps$.

By \eqref{eq:exp} we have that
\begin{eqnarray*}
\E[Z] &=&  m \sum_{I \in \mathcal{I}} \int_I f^2(t)(p(t)+q(t))dt +O(1)\\
&=& \Omega \left( m\sum_{I \in \mathcal{I}} m^2 \delta_I^3 \right)
= \Omega \left( m^3 (\sum_{I \in \mathcal{I}}  \delta_I)^3/(3k)^2 \right) \\
& = & \Omega \left(m^3\eps^3/k^2 \right) = \Omega (C^{5/2}\sqrt{m}) \;.
\end{eqnarray*}
{We note that the second to last line above follows by H\"{o}lder's inequality.}
It remains to bound from above the variance of $Z$.
\begin{lemma} \label{lemma:variance}
We have that
$\Var[Z]  = O(m) \;.$
\end{lemma}
\begin{proof}
We divide the domain
$[0,1]$ into $m$ intervals $I_i$, $i=1, \ldots, m$,
each of total mass $2/m$
under the sum-distribution $p+q$.
Consider the random variable $X_i$ denoting
the contribution to $Z$ coming from pairs of adjacent samples in $S$
such that the right sample is drawn from  $I_i$.
Clearly, $Z = \sum_{i=1}^m X_i$ and
$\Var[Z] = \sum_{i=1}^m \Var[X_i] + \sum_{i\neq j} \mathrm{Cov}(X_i, X_j).$

To bound the first sum, note that the number of pairs of $S$ in an interval $I_i$
is no more than the number of samples drawn from $I_i$, and the variance of $X_i$
is less than the expectation of the square of the number of samples from $I_i$.
Since the number of samples from $I_i$ is a Poisson random variables with parameter $2$, we have
$\Var[X_i] =O(1)$. This shows that $ \sum_{i=1}^m \Var[X_i] = O(m) .$

To bound the sum of covariance,
consider $X_i$ and $X_j$ conditioned on the samples drawn from intervals other than $I_i$ and $I_j$. Note that if any sample is drawn from an intermediate interval, $X_i$ and $X_j$ are uncorrelated, and otherwise their covariance is at most $\sqrt{\var(X_i)\var(X_j)}=O(1)$.
Since the probability that no sample is drawn from any intervening interval decreases exponentially with their separation,
it follows that $\mathrm{Cov}(X_i, X_j) = O(1)\cdot e^{-\Omega(|j-i|)}$. This completes the proof.
\end{proof}
An application of Chebyshev's inequality completes the analysis of the soundness and the proof of Proposition~\ref{prop:ak-simple}.
\end{proof}

\subsection{The General Tester} \label{ssec:gen-ub}
In this section, we  present a tester whose sample complexity is optimal (up to constant factors)
for all values of $\eps$ and $k$, thereby establishing the upper bound part of Theorem~\ref{thm:main}.
Our general tester (Algorithm Test-Identity-$\mathcal{A}_k$) builds on the tester presented in the previous subsection (Algorithm  Simple-Test-Identity-$\mathcal{A}_k$).
It is not difficult to see that the latter algorithm can fail once $\eps$ becomes sufficiently small, 
if the discrepancy between $p$ and $q$ is concentrated on intervals of
mass larger than $1/m$. In this scenario, the tester Simple-Test-Identity-$\mathcal{A}_k$
will not take sufficient advantage of these intervals. 
To obtain our enhanced tester Test-Identity-$\mathcal{A}_k$, we will need to combine
Simple-Test-Identity-$\mathcal{A}_k$ with an alternative tester when this is the case.
Note that we can easily bin the distributions $p$ and $q$ into intervals of total mass
approximately $1/m$ by taking $m$ random samples.
Once we do this, we can use an identity tester similar to that in our previous work~\cite{DKN15a} to detect the discrepancy in these intervals.
In particular we show the following:

\begin{proposition}\label{UniformTesterProp}
Let $p, q$ be discrete distributions over $[n]$ satisfying $\| p\|_2,\|q\|_2 = O(1/\sqrt{n})$.
There exists a testing algorithm with the following properties: On input
$k \in \Z_+$, $ 2 \le k \le n$, and $\delta, \eps>0$, the algorithm draws $O\left((\sqrt{k}/\eps^2) \cdot \log(1/\delta)\right)$ samples
from $p$ and $q$ and with probability at least $1-\delta$ distinguishes between the cases $p=q$ and $\|p-q\|_{\mathcal{A}_k} > \eps$.
 \end{proposition}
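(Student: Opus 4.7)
The plan is to follow the oblivious-partition strategy of \cite{DKN15a} for $\Ak$-identity testing, replacing the one-sample uniformity subroutine with a two-sample $\chi^2$-closeness statistic in the spirit of \cite{CDVV14}. The $L_2$-boundedness hypothesis $\|p\|_2,\|q\|_2 = O(1/\sqrt n)$ plays two distinct roles: it ensures that the reduced distributions induced by any equal-sized interval partition of $[n]$ remain $L_2$-bounded (which is precisely the regime in which a $\chi^2$-type closeness statistic attains the target $O(\sqrt{k}/\eps^2)$ sample complexity), and, in the soundness analysis, it bounds the mass that ``leaks'' across misaligned partition boundaries.

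Concretely, I would construct a small family of oblivious partitions $\mathcal{I}_1, \ldots, \mathcal{I}_L$ of $[n]$ at geometrically increasing scales, as in \cite{DKN15a}: for $\ell$ in a suitable range, $\mathcal{I}_\ell$ partitions $[n]$ into $\Theta(k\cdot 2^\ell)$ equal-sized intervals. The family is designed so that any $k$-interval witness for $\Ak$-distance is approximated, up to boundary rounding, by one of the $\mathcal{I}_\ell$. Then I would draw $O(\sqrt{k}/\eps^2)$ samples from each of $p$ and $q$ and, for each $\ell$, form the reduced empirical counts under $\mathcal{I}_\ell$ and compute the $\chi^2$-style closeness statistic of \cite{CDVV14}; the algorithm outputs ``YES'' iff every scale accepts. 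Amplification from constant confidence to $1-\delta$ is done by $O(\log(1/\delta))$ independent repetitions followed by a majority vote, which is where the $\log(1/\delta)$ factor in the stated bound arises.

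The crux of the argument is a structural lemma: if $\|p-q\|_{\Ak} > \eps$, then for some scale $\ell$ we have $\|p_r^{\mathcal{I}_\ell} - q_r^{\mathcal{I}_\ell}\|_1 = \Omega(\eps)$. To prove it, I would take the optimal $\Ak$-witness partition $\mathcal{J}^\ast$ (of $k$ intervals) and round each of its $k$ boundaries to the nearest boundary of $\mathcal{I}_\ell$ for the scale matched to the typical interval length of $\mathcal{J}^\ast$; the $L_2$ hypothesis on $p,q$ controls the mass displaced at each of the $k$ mismatched boundaries by a quantity proportional to the interval length of $\mathcal{I}_\ell$ scaled by $1/\sqrt n$, so that only a constant fraction of $\|p-q\|_{\Ak}$ is lost by passing to reduced distributions. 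Completeness is straightforward since, when $p=q$, each of the reduced $\chi^2$ statistics is centred at $0$ with variance controlled by the CDVV14 analysis, and a union bound over the $L$ scales absorbs into constants after amplification.

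The main obstacle is the structural lemma: one needs to make the rounding argument go through with boundary-displacement bounds that are uniform in both $\eps$ and $k$, and to calibrate the $\chi^2$ thresholds at each scale so that the combined false-positive probability is a small constant while keeping the total sample budget at $O(\sqrt{k}/\eps^2)$. Once the lemma is in hand, the $\chi^2$-analysis of \cite{CDVV14} at the scale selected by the lemma delivers the claimed sample complexity, and the final $\log(1/\delta)$ factor follows from standard amplification.
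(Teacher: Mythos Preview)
Your high-level framework matches the paper's: a geometric family of oblivious equal-width partitions $\mathcal{I}^{(j)}$ with $\ell_j=\Theta(k\cdot 2^j)$ parts, the $L_2$ closeness tester of \cite{CDVV14} applied to each reduced pair $(p_r^{\mathcal{I}^{(j)}},q_r^{\mathcal{I}^{(j)}})$, a union bound over scales, and $O(\log(1/\delta))$ repetitions for amplification. The $L_2$ hypothesis is indeed what makes the reduced distributions satisfy $\|p_r^{\mathcal{I}^{(j)}}\|_2=O(1/\sqrt{\ell_j})$, so that the CDVV14 tester applies. Where your proposal diverges from the paper, and where it breaks, is the structural lemma.

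You propose to show that some scale $\ell$ satisfies $\|p_r^{\mathcal{I}_\ell}-q_r^{\mathcal{I}_\ell}\|_1=\Omega(\eps)$, by rounding the $k$ witness endpoints to $\mathcal{I}_\ell$-boundaries and using $\|p\|_2,\|q\|_2=O(1/\sqrt n)$ to bound the displaced mass. Quantify that: an $\mathcal{I}_\ell$-cell has length $n/(k2^\ell)$, so by Cauchy--Schwarz it carries mass at most $O(1/\sqrt{k2^\ell})$; over $k$ boundaries the total rounding error is $O(\sqrt{k}\,2^{-\ell/2})$. To make this $\le\eps/2$ you need $2^\ell\ge\Omega(k/\eps^2)$, i.e.\ $\ell_\ell\ge\Omega(k^2/\eps^2)$. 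But then detecting $L_1$ gap $\Omega(\eps)$ on a domain of that size via the $L_2$ tester costs $\Omega(\sqrt{\ell_\ell}/\eps^2)=\Omega(k/\eps^3)$ samples, not $O(\sqrt{k}/\eps^2)$. Conversely, at coarse scales the rounding error swamps $\eps$. So the $L_1$-based lemma cannot be calibrated to stay inside the sample budget.

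The paper avoids this by proving an $L_2$ structural lemma with a threshold that \emph{grows} with the scale: for some $j$, $\|p_r^{\mathcal{I}^{(j)}}-q_r^{\mathcal{I}^{(j)}}\|_2\ge C\eps\,2^{3j/8}/\sqrt{\ell_j}$. With this calibration the per-scale cost is $O(\sqrt{\ell_j}/(\eps\,2^{3j/8})^2\cdot j)=O(\sqrt{k}\,2^{-j/4}j/\eps^2)$, summing to $O(\sqrt{k}/\eps^2)$. The proof does \emph{not} use the $L_2$ hypothesis for boundary control at all; instead it introduces a ``scale-sensitive $L_2$'' quantity $\max_{\mathcal I}\sum_I \discr(I)^2/\width(I)^{1/8}$, lower-bounds it by $\eps^2/(2k)^{7/8}$ via Cauchy--Schwarz on the witness partition, and then upper-bounds it by the multi-scale sum $\sum_j 2^{j/8}\|p_r^{\mathcal{I}^{(j)}}-q_r^{\mathcal{I}^{(j)}}\|_2^2$ using an \emph{extremality} argument (if an end-fragment of an optimal interval carried too much discrepancy, splitting it off would increase the maximum). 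Pigeonhole over $j$ then yields the scale with the required $L_2$ gap. In short: replace your $L_1$ rounding lemma by an $L_2$ lemma with scale-dependent thresholds, and prove it via the scale-sensitive functional and extremality, not via the $L_2$-norm hypothesis on $p,q$.
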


The above proposition says that the identity testing problem under the $\mathcal{A}_k$ distance can be solved with
$O(\sqrt{k}/\eps^2)$ samples when both distributions $p$ and $q$ are promised to be ``nearly'' uniform (in the sense that their
$L_2$ norm is $O(1)$ times that of the uniform distribution). To prove Proposition~\ref{UniformTesterProp} we follow a similar approach
as in~\cite{DKN15a}: Starting from the $L_2$ identity tester of~\cite{CDVV14},
we consider several oblivious interval decompositions of the domain into intervals of approximately the same mass, and apply
a ``reduced'' identity tester for each such decomposition.
The details of the analysis establishing Proposition~\ref{UniformTesterProp} are postponed to Appendix~\ref{sec:app}.


\medskip

\noindent We are now ready to present our general testing algorithm:

\medskip

 \fbox{\parbox{6in}{
{\bf Algorithm} Test-Identity-$\mathcal{A}_k(p, q, \eps)$\\
Input: sample access to distributions $p, q: [0, 1] \to \R_+$, $k \in \Z_+$, and $\eps>0$.\\
Output: ``YES'' if $q = p$; ``NO'' if $\|q-p\|_{\mathcal{A}_k} \ge \eps.$
\begin{enumerate}
\item Let $m=C k^{4/5}/\eps^{6/5}$, for a sufficiently large constant $C$.
Draw two sets of samples $S_p$, $S_q$ each of size $\Poi(m)$ from $p$ and from $q$ respectively.


\item Merge $S_p$ and $S_q$ while remembering from which distribution each sample comes from.
Let $S$ be the union of $S_p$ and $S_q$ sorted in increasing order (breaking ties randomly).


\item Compute the statistic $Z$ defined as follows:
\begin{eqnarray*}
Z \eqdef &\#&(\mbox{pairs of successive samples in $S$ coming from the same distribution}) -\\
&\#&(\mbox{pairs of successive samples in $S$ coming from different distributions})
\end{eqnarray*}


\item  If $Z>5\sqrt{m}$ return ``NO''.


\item Repeat the following steps $O(C)$ times:
\begin{itemize}
\item[(a)] Draw $\Poi(m)$ samples from $(p+q)/2$.
\item[(b)] Split the domain into intervals with the interval endpoints given by the above samples.
Let $p'$ and $q'$ be the reduced distributions with respect to these intervals.

\item[(c)]  Run the tester of Proposition \ref{UniformTesterProp} on $p'$ and $q'$ with error probability $1/C^2$
to determine if $\|p'-q'\|_{\mathcal{A}_{2k+1}}>\eps/C$. If the output of this tester is ``NO'', output ``NO''.
\end{itemize}


\item Output ``YES''.
\end{enumerate}

}}

\medskip

Our main result for this section is the following:

\begin{theorem} \label{thm:main-upper}
Algorithm Test-Identity-$\mathcal{A}_k$
draws $O(\max\{k^{4/5}/\eps^{6/5},k^{1/2}/\eps^2\})$ samples from $p, q$
and with probability at least $2/3$ returns ``YES'' if $p=q$ and ``NO'' if $\|p-q\|_{\mathcal{A}_k} > \eps$.
\end{theorem}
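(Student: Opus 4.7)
The plan is to split the analysis into sample complexity, completeness, and a two-regime soundness argument that cases on where the discrepancy $\|p-q\|_{\mathcal{A}_k} > \eps$ lives.

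Sample complexity and completeness are routine. Step~1 draws $\Poi(m) = O(k^{4/5}/\eps^{6/5})$ samples, and each of the $O(C)$ iterations of Step~5 draws $\Poi(m)$ samples for the partitioning in Step~5(a) plus $O(\sqrt{k}/\eps^2 \cdot \log C)$ more samples to simulate Proposition~\ref{UniformTesterProp} on $(p',q')$; summing gives the claimed $O(\max\{k^{4/5}/\eps^{6/5},\,\sqrt{k}/\eps^2\})$ bound. For completeness, when $p=q$ the computation in the proof of Proposition~\ref{prop:ak-simple} shows that $Z$ is a sum of $\Poi(2m)-1$ i.i.d.\ $\pm 1$ variables, so Chebyshev gives $\Pr[|Z|>5\sqrt{m}]<2/25$; each of the $O(C)$ invocations of Proposition~\ref{UniformTesterProp} outputs ``YES'' with probability $\ge 1-1/C^2$, and a union bound produces overall failure $O(1/C)$ in Step~5, yielding total success probability $\ge 2/3$ when $C$ is large.

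For soundness, assume $\|p-q\|_{\mathcal{A}_k}>\eps$ and refine a witness partition into at most $3k$ parts of $(p+q)$-mass $\le 2/k$ each, as in the proof of Proposition~\ref{prop:ak-simple}. Split these parts into \emph{fine} ones of mass below $1/m$ and \emph{coarse} ones of mass at least $1/m$. If the fine parts carry discrepancy at least $\eps/2$, then Lemma~\ref{ZContLem} applies to each of them and the expectation computation from Proposition~\ref{prop:ak-simple}, restricted to fine parts, gives $\E[Z] \ge \Omega(m^3(\eps/2)^3/(3k)^2) = \Omega(C^{5/2}\sqrt{m})$; combined with $\Var[Z]=O(m)$ from Lemma~\ref{lemma:variance} and Chebyshev, this forces $Z>5\sqrt{m}$ and Step~4 triggers. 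Otherwise, the coarse parts carry discrepancy $\ge\eps/2$, and I would show that in each iteration of Step~5 the random partition $\mathcal{J}$ preserves a $\ge\eps/C$ fraction of it in the $\mathcal{A}_{2k+1}$ sense with probability $\Omega(1)$, amplified to $\ge 5/6$ across the $O(C)$ repetitions. Concretely, for each of the $O(k)$ boundaries $b$ of coarse parts I would isolate the $\mathcal{J}$-interval $J(b)$ as its own part and retain the bulk regions between consecutive such $J(b)$'s, yielding a $\mathcal{J}$-aligned $(2k+1)$-partition; a triangle inequality gives $\|p'-q'\|_{\mathcal{A}_{2k+1}} \ge \eps/2 - 2\sum_b (p+q)(J(b))$, while the Poisson-process structure of $\mathcal{J}$ makes each $(p+q)(J(b))$ a gamma-type spacing of mean $O(1/m)$, allowing a Markov bound on the shift sum. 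A parallel tail estimate on the maximum Poisson spacing verifies the near-uniformity hypothesis $\|p'\|_2,\|q'\|_2 = O(1/\sqrt{m})$ required by Proposition~\ref{UniformTesterProp}.

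The hard part will be the coarse case in the regime where $\eps$ is near $k^{-1/6}$: there $m = O(k^{4/5}/\eps^{6/5})$ is only a small constant multiple of $k$, so the naive Markov estimate $\E[\sum_b (p+q)(J(b))] = O(k/m)$ is not automatically absorbed into $\eps/2 - \eps/C$, and preserving the discrepancy to $\mathcal{A}_{2k+1}$-level $\eps/C$ needs a finer argument exploiting that the $\mathcal{J}$-spill around a boundary is small \emph{relative to} the coarse part's own mass, so only a vanishing fraction of each coarse contribution is sacrificed. Handling this while simultaneously securing the L$_2$-norm precondition of Proposition~\ref{UniformTesterProp} (so that the reduced distributions are near-uniform over $\sim m$ bins) is the main technical hurdle.
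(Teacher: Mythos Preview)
Your sample-complexity and completeness arguments, and your fine-case soundness, are essentially the paper's. You have also correctly located the hard spot. The gap is that your proposed repair for the coarse case does not close.

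With your fine/coarse split (by total $(p+q)$-mass of each refined part), the triangle-inequality loss $2\sum_b (p+q)(J(b))$ has expectation $\Theta(k/m)$; at $\eps \approx k^{-1/6}$ (so $m\approx Ck$) this is $\Theta(1/C)$, far larger than $\eps/2 \approx k^{-1/6}/2$. Your suggested fix (``the spill is small relative to the coarse part's own mass'') fails: a coarse part has mass $\geq 1/m$ while the spill also has expected mass $\Theta(1/m)$, so the spill can be a \emph{constant} fraction of the part, and all of $\delta_I$ can sit in that boundary sliver and vanish upon trimming. Nothing in ``mass $\geq 1/m$'' prevents this.

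The paper's dichotomy is engineered so that both cases close. Working with the original $k$-partition (no refinement), call $I$ \emph{small} if some subinterval $J\subseteq I$ has $p(J)+q(J)<1/m$ and $|p(J)-q(J)|\geq \delta(I)/3$; otherwise $I$ is \emph{large}. For small $I$, apply Lemma~\ref{ZContLem} to $J$ rather than to $I$, recovering the same $\E[Z]=\Omega(m^3\eps^3/k^2)$ bound as your fine case. For large $I$, the endpoint pieces between the original boundary and the first/last sample inside $I$ have $(p+q)$-mass $<1/m$ with constant probability (your Poisson-spacing observation), and \emph{by the definition of large} each such piece then carries discrepancy $<\delta(I)/3$; hence the trimmed interval retains at least $\delta(I)/3$. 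The loss is now a constant fraction \emph{of each} $\delta(I)$ rather than an absolute $\Theta(k/m)$, so $\sum_{I\ \mathrm{large}}\delta(I)\geq\eps/2$ yields $\|p'-q'\|_{\mathcal{A}_{2k+1}}\geq c\eps$ with constant probability, after which your amplification over $O(C)$ rounds finishes. The preliminary refinement to mass $\leq 2/k$ is unnecessary here and only creates extra boundaries.
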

\begin{proof}
First, it is easy to see that the sample complexity of the algorithm is $O(m+k^{1/2}/\eps^2)$.
Recall that we can assume that $p, q$ are continuous pdf's bounded from above by $2$.

We start by establishing completeness. If $p=q$, it is once again the case that $\E[Z]=0$ and $\var[Z]<2m$,
so by Chebyshev' s inequality, Step~4 will fail with probability at most $1/9$.
Next when taking our samples in Step~5(a), note that the expected  samples size is $O(m)$
and that the expected squared $L_2$ norms of the reduced distributions $p'$ and $q'$ are $O(1/m)$.
Therefore, with probability at least $1-1/C^2$,
$p'$ and $q'$ satisfy the hypothesis of Proposition \ref{UniformTesterProp}.
Hence,  this holds for all $C$ iterations with probability at least $8/9$.

Conditioning on this event, since $p'=q'$,
the tester in Step~5(c) will return ``YES'' with probability at least $1-1/C^2$ on each iteration.
Therefore, it returns ``YES'' on all iterations with probability at least $8/9$.
By a union bound, it follows that  if $p=q$, our algorithm returns ``YES'' with probability at least $2/3$.

We now proceed to establish soundness.
Suppose that $\|p-q\|_{\mathcal{A}_k} \geqslant \eps$.
Then there exists a partition $\mathcal{I}$ of the domain into $k$ intervals such that
$\|p_r^{\mathcal{I}}-q_r^{\mathcal{I}}\| \geqslant \eps.$
For an interval $I\in\mathcal{I}$, let $\delta(I) = |p(I)-q(I)|$. We will call an $I\in\mathcal{I}$ small
if there is a subinterval $J \subseteq I$ so that $p(J)+q(J)<1/m$ and $|p(J)-q(J)|\geqslant \delta(I)/3$.
We will call $I$ large otherwise.
Note that $
\sum_{I\in\mathcal{I},I  \textrm{ small}} \delta(I) +
\sum_{I\in\mathcal{I},I  \textrm{ large}} \delta(I) =
\sum_{I\in\mathcal{I}} \delta(I) \geqslant \eps.$
Therefore either $ \sum_{I\in\mathcal{I},I  \textrm{ small}} \delta(I) \geqslant \eps/2,$
or $ \sum_{I\in\mathcal{I},I \textrm{ large}} \delta(I) \geqslant \eps/2.$
We analyze soundness separately in each of these cases.

Consider first the case that $\sum_{I\in\mathcal{I},I  \textrm{ small}} \delta(I) \geqslant \eps/2.$
The analysis in this case is very similar to the soundness proof of Proposition~\ref{prop:ak-simple}
which we describe for the sake of completeness.

By definition,
for each small interval $I$, there exists a subinterval $J$
so that $p(J)+q(J)<1/m$ and $|p(J)-q(J)|>\delta(I)/2$.
By Lemma \ref{ZContLem}, for such $J$ we have that
$ \int_J f^2(t)(p(t)+q(t))dt = \Omega(m^2 \delta^3(I)),$
and therefore, that $\int_I f^2(t)(p(t)+q(t))dt = \Omega(m^2 \delta^3(I)).$
Hence, we have that
\begin{align*}
\E[Z] & \geqslant m\int_{0}^1 f^2(t)(p(t)+q(t))dt\\
& \geqslant \sum_{I\in\mathcal{I},I  \textrm{ small}} m\int_I f^2(t)(p(t)+q(t))dt\\
& \geqslant \sum_{I\in\mathcal{I},I  \textrm{ small}} \Omega(m^3\delta^3(I))\\
& \geqslant \Omega(m^3) \left( \sum_{I\in\mathcal{I},I  \textrm{ small}} \delta(I) \right)^3/k^2\\
& = \Omega(m^3\epsilon^3/k^2)\\
& = \Omega(C^{5/2}\sqrt{m}).
\end{align*}
On the other hand, Lemma~\ref{lemma:variance} gives that $\var[Z] = O(m)$, so for $C$ sufficiently large,
Chebyshev's inequality implies that with probability at least $2/3$ it holds $Z>5\sqrt{m}$.
That is, our algorithm outputs ``NO'' with probability at least $2/3$.

Now consider that case that $\sum_{I\in\mathcal{I},I  \textrm{ large}} \delta(I) \geqslant \eps/2.$
We claim that the second part of our tester will detect the discrepancy between $p$ and $q$ with high constant probability.
Once again, we can say that with probability at least $8/9$ the squared $L_2$ norms of the reduced distributions $p'$ and $q'$ are both $O(1/m)$
and that the size of the reduced domain is $O(m)$. Thus, the conditions of Proposition \ref{UniformTesterProp} are satisfied on all iterations with probability at least $8/9$.
To complete the proof, we will show that with constant probability we have $\|p'-q'\|_{\mathcal{A}_{2k+1}}>\eps/C$.
To do this, we construct an explicit partition $\mathcal{I'}$ of our reduced domain into at most $2k+1$ intervals so that with constant probability
$\|p_r^{\mathcal{I'}}-q_r^{\mathcal{I'}}\|_1 >\eps/C$. This will imply that with probability at least $8/9$ that on at least one of our $C$ trials that $\|p_r^{\mathcal{I'}}-q_r^{\mathcal{I'}}\|_1 >\eps/C$.

More specifically, for each interval $I\in\mathcal{I}$ we place interval boundaries at the smallest and largest sample points taken from $I$ in Step~5(a)
(ignoring them if fewer than two samples landed in $I$). Since we have selected at most $2k$ points, this process
defines a partition  $\mathcal{I'}$ of the domain into at most $2k+1$ intervals.
We will show that the reduced distributions $p''=p_r^{\mathcal{I'}}$ and $q''=q_r^{\mathcal{I'}}$ have large expected $L_1$ error.

In particular, for each interval $I\in\mathcal{I}$ let $I'$ be the interval between the first and last sample points of $I$. Note that $I'$ is an interval in the partition $\mathcal{I'}$. We claim that if $I$ is large, then with constant probability
$$
|p(I')-q(I')| = \Omega(\delta(I)).
$$
Let $I=[X,Y]$ and $I'=[x,y]$ (so $x$ and $y$ are the smallest and largest samples taken from $I$, respectively). We note that if $p([X,x])+q([X,x])<1/m$ and $p([y,Y])+q([y,Y])<1/m$ then
$$
|p(I')-q(I')| \geqslant |p(I)-q(I)| - |p([X,x])-q([X,x])| - |p([y,Y])-q([y,Y])| \geqslant \delta(I) - \delta(I)/3- \delta(I)/3 = \delta(I)/3,
$$
where the second inequality uses the fact that $I$ is large. On the other hand, we note that $p([X,x])+q([X,x])$ and $p([y,Y])+q([y,Y])$ are exponential distributions with mean $1/m$, and thus, this event happens with constant probability. Let $N_I$ be the indicator random variable for the event that $|p(I')-q(I')|\geqslant \delta(I)/3$. We have that
$$
\|p''-q''\|_1 \geqslant \sum_I N_I\delta(I)/3 \geqslant \sum_{I\in\mathcal{I},I \textrm{ large}} N_I\delta(I)/3.
$$
Thus, we have that
$$
\|p''-q''\|_1 \geqslant \sum_{I\in\mathcal{I},I  \textrm{ large}} \delta(I)/3 - \sum_{I\in\mathcal{I},I \textrm{ large}}(1-N_I)\delta(I)/3.
$$
Therefore, since
$$
\E\left[  \sum_{I\in\mathcal{I},I  \textrm{ large}}(1-N_I)\delta(I)/3\right] <
\left(\sum_{I\in\mathcal{I},I  \textrm{ large}}\delta(I)/3\right)(1-c)
$$
for some fixed $c>0$, we have that with constant probability that
$$
\|p''-q''\|_1 \geqslant c \sum_{I\in\mathcal{I},I \textrm{ large}} \delta(I)/3 \geqslant c\eps/6 \geqslant \eps/C.
$$
This means that with probability at least $8/9$ for at least one iteration
we will have that $\|p'-q'\|_{\mathcal{A}_{2k+1}}>\eps/C$, and therefore, with probability at least $2/3$, our algorithm outputs ``NO''.

\end{proof}

\section{Lower Bound for $\mathcal{A}_k$ Closeness Testing} \label{sec:lb}

Our upper bound from Section \ref{sec:alg} seems potentially suboptimal.
Instead of obtaining an upper bound of $O( \max\{ k^{2/3}/\eps^{4/3}, k^{1/2}/\eps^2 \})$,
which would be analogical to the unstructured testing result of~\cite{CDVV14},
we obtain a very different bound of $O( \max\{ k^{4/5}/\eps^{6/5}, k^{1/2}/\eps^2 \})$.
In this section we show, surprisingly, that our upper bound is optimal for 
continuous distributions, or discrete distributions with support size $n$ that is sufficiently large
as a function of $k$.


Intuitively, our lower bound proof consists of two steps.
In the first step, we show it is no loss of generality to assume 
that an optimal algorithm only considers the ordering of the samples,
and ignores all other information. In the second step, we construct a pair of distributions
which is hard to distinguish given the condition that the tester is only allowed
to look at the ordering of the samples and nothing more.

Our first step is described in the following theorem. 
{We note that unlike the arguments in the upper bound proofs, 
this part of our lower bound technique will work best for random variables of discrete support.}

\begin{theorem}\label{lb:thm-exist}
For all $n, k, m \in \Z_+$ there exists $N \in \Z_+$ such that the following holds:
If there exists an algorithm $A$ that for every pair of distributions $p$ and $q$, supported over $[N]$,
distinguishes the case $p=q$ from the case $\lVert p-q \rVert_{\mathcal{A}_k} \gs \eps$ drawing $m$ samples,
then there exists an algorithm $A'$ that  for every pair of distributions $p'$ and $q'$ supported on $[n]$
distinguishes the case $p'=q'$ versus $\lVert p'-q' \rVert_{\mathcal{A}_k} \gs \eps$ using the same number samples $m$.
Moreover, $A'$ only considers the ordering of the samples and ignores all other information.
\end{theorem}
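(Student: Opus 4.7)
The plan is to use the infinite/finite Ramsey theorem to find a large subset $S \subseteq [N]$ of size $n$ on which the behavior of the algorithm $A$ depends only on the order-label pattern of its samples, and then embed the $[n]$-problem into $S$ via an order-preserving bijection, which preserves $\mathcal{A}_k$-distance. More precisely, for each sorted $2m$-element multiset $\vec{x}$ of possible sample values in $[N]$ and each assignment $\vec{\ell} \in \{p,q\}^{2m}$ of labels with exactly $m$ of each kind, the algorithm $A$ returns ``YES'' with some probability $\phi(\vec{x}, \vec{\ell}) \in [0,1]$. We will exhibit an $S \subseteq [N]$ of size $n$ such that $\phi(\vec{x}, \vec{\ell})$ is essentially independent of $\vec{x}$ as $\vec{x}$ varies over sorted $2m$-tuples from $S$ (with $\vec{\ell}$ fixed).

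To do this, fix a small discretization parameter $\delta>0$ (for instance, $\delta < 1/100$, so that any error introduced by rounding is easily absorbed into the constant gap in the success probability of $A$). Color each $2m$-element subset $T \subseteq [N]$ by the vector, indexed by the $\binom{2m}{m}$ balanced label patterns $\vec{\ell}$, of the values $\lfloor \phi(\vec{x}_T, \vec{\ell})/\delta \rfloor$ where $\vec{x}_T$ is $T$ written in increasing order. Since the number of colors is a finite constant depending only on $m$ and $\delta$, the finite Ramsey theorem guarantees that for $N$ sufficiently large (in terms of $n$, $m$, and $\delta$) every such coloring of $\binom{[N]}{2m}$ admits a monochromatic subset $S$ of size $n$. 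On this $S$, the value $\phi(\vec{x},\vec{\ell})$ is determined up to additive error $\delta$ by $\vec{\ell}$ alone.

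To define $A'$, let $\phi\colon [n] \to S$ be the unique order-preserving bijection. Given samples from $p',q'$ on $[n]$, push them through $\phi$ to obtain samples from $\phi_*(p'),\phi_*(q')$, which are distributions supported on $S \subseteq [N]$; run $A$ on these samples and return its answer. Since $\phi$ is order-preserving, each interval $I \subset [N]$ pulls back to an interval $\phi^{-1}(I) \cap [n]$, so intervals map to intervals and hence $\|\phi_*(p') - \phi_*(q')\|_{\mathcal{A}_k} = \|p' - q'\|_{\mathcal{A}_k}$. Consequently, $p' = q'$ iff $\phi_*(p') = \phi_*(q')$ and $\|p' - q'\|_{\mathcal{A}_k} \geq \eps$ iff the same holds after pushforward, so correctness of $A$ on $[N]$ transfers to correctness of $A'$ on $[n]$ (with success probability degraded by at most $O(\delta)$, which is negligible). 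Moreover, because the samples fed to $A$ land in $S$, the output probability depends (up to $\delta$) only on the order-label pattern of these samples, which is the same as the order-label pattern of the original samples from $p',q'$; hence $A'$ is order-based.

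The main obstacle is that $A$ is randomized and $\phi(\vec{x},\vec{\ell})$ takes continuous values, so naive Ramsey cannot be applied. This is handled by the discretization step, which reduces the coloring to a finite palette at the cost of a controllable $O(\delta)$ perturbation of the acceptance probability; choosing $\delta$ small compared to the constant success gap (after possibly boosting $A$) ensures that $A'$ still distinguishes with probability at least $2/3$. A secondary technical point is how $A$ handles samples that coincide (which only occurs with probability zero for continuous distributions but may occur on $[n]$ after pushforward); this can be resolved by adopting the convention, as in the upper-bound algorithm, that ties are broken uniformly at random, which does not affect the Ramsey argument since the tie-breaking is independent of the values.
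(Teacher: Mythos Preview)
Your approach is essentially the paper's: apply hypergraph Ramsey to the $2m$-subsets of $[N]$, find a monochromatic $S$ of size $n$, and embed $[n]$ into $S$ by the unique order-preserving bijection, which preserves $\mathcal{A}_k$-distance. Your discretization of the acceptance probabilities is a clean way to handle randomized $A$; the paper instead simply asserts that $A$ may be taken deterministic and colors by the full output function (at most $2^{2^{4m}}$ colors), which sidesteps the issue but is a bit cavalier.

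The one genuine gap is your treatment of collisions. Your Ramsey coloring is on $2m$-element \emph{subsets} of $[N]$, so monochromaticity only tells you that $\phi(\vec{x},\vec{\ell})\approx\psi(\vec{\ell})$ when the $2m$ sample values in $S$ are \emph{distinct}. For discrete $p',q'$ on $[n]$ the pushforward samples can coincide with probability arbitrarily close to $1$ (e.g.\ $p'=q'$ a point mass), and on such inputs you have no control over $A$'s output. Declaring that $A'$ breaks ties uniformly at random makes $A'$ order-based by fiat, but it does not by itself establish \emph{correctness}: you never compare $A'$ to $A$ on the tied inputs, and $A$'s behavior there is completely unconstrained by the Ramsey step. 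The paper closes this gap by a preprocessing step: replace each point of $[n]$ by $200m^2$ consecutive sub-points and spread the mass uniformly, which preserves $\mathcal{A}_k$-distance exactly and makes collisions among $2m$ samples occur with probability at most $1/10$. One then applies Ramsey with target size $200m^2 n$ (so $N$ must be chosen correspondingly larger). With this amendment your argument is complete --- and indeed, at the level of order-label patterns, sampling from the sub-binned distribution is precisely your uniform tie-breaking, so your intuition was pointing in the right direction; it just needed to be routed through an input on which $A$ is actually being run.
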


\begin{proof}
{As a preliminary simplification, we assume that our algorithm, 
instead of taking $m$ samples from any combination of $p$ or $q$ of its choosing, 
takes exactly $m$ samples from $p$ and $m$ samples from $q$, 
as such algorithms are strictly more powerful. 
This also allows us to assume that the algorithm merely takes these random samples 
and applies some processing to determine its output.}

As a critical tool of our proof, we will use the classical Ramsey theorem for hypergraphs.
For completeness, we restate it here in a slightly adapted form.

\begin{lemma}[Ramsey theorem for hypergraphs, \cite{conlon2010hypergraph}]\label{lb:lemma-ramsey}
Given a set $S$ and an integer $t$ let $\binom{S}{t}$ denote the set of subsets of $S$ of cardinality $t$.
For all positive integers, $a$, $b$ and $c$, there exists a positive integer $N$
so that for any function $f:\binom{[N]}{a}\rightarrow [b]$, there exists an $S\subset [N]$ with $|S|=c$ so that $f$ is constant on $\binom{S}{a}$.
\end{lemma}

In words, this means that if we color all subsets of size $a$ of a size $N$ set with at most $b$ different colors, then for large enough $N$ we will find a (bigger) subset $T$ such that all its subsets are colored with the same color. Note that in our setting $c$ from the theorem equals $n$.

The idea of our proof is as follows. Given an algorithm $A$, we will use it to implement the algorithm $A'$. Given $A$, we produce some monotonic function $f:[n]\rightarrow[N]$, and run $A$ on the distributions $f(p)$ and $f(q)$. Since $f$ is order preserving, $\|f(p)-f(q)\|_{\mathcal{A}_k}=\|p-q\|_{\mathcal{A}_k}$, so our algorithm is guaranteed to work. The tricky part will be to guarantee that the output of this new algorithm $A'$ depends only on the ordering of the samples that it takes. Since we may assume that $A$ is deterministic, once we pick which $2m$ samples are taken from $[N]$ the output will be some function of the ordering of these samples (and in particular which are from $p$ and which are from $q$). For the algorithm $A$, this function may depend upon the values that the samples happened to have. Thus, for $A'$ to depend only on order, we need it to be the case that $A$ behaves the same way on any subset of $\textrm{Im}(f)$ of size $2m$. Fortunately, we can find such a set using {Lemma} \ref{lb:lemma-ramsey}.

Since our sample set has size at most $2m$, it is clear that the total number of possible sample sets is at most $N^{2m}$. We color each of these subsets of $[N]$ of size $a=2m$ one of a finite number of colors. The color associates to the sample set the function that $A$ uses to obtain an output given $2m$ samples given by this set coming in a particular order (some of which are potentially equal). The total number of such functions is at most $b=2^{2^{4m}}$. We let $n$ be the proposed support size for $p'$ and $q'$. By Lemma \ref{lb:lemma-ramsey}, for $N$ sufficiently large, there are sets of size $n$ such that the function has the same value in samples from these sets. Letting $f$ be the unique monotonic function from $[n]$ to $[N]$ with this set as its image, causes the output $A'$ to depend only on the ordering of the samples. 

The above reduction works as long as the samples given to our algorithm $A'$ are distinct. 
To deal with the case where samples are potentially non-distinct, we show that it is possible 
to reduce to the case where all $2m$ samples are distinct with $9/10$ probability. 
To do this, we divide each of our original bins into $200m^2$ sub-bins, and upon 
drawing a sample from a given bin, we assign it instead to a uniformly random sub-bin. 
This procedure maintains the $\mathcal{A}_k$ distance between our distributions, 
and guarantees that the probability of a collision is small. 
Now, our algorithm $A'$ will depend only on the order of the samples so long as there is no collision. 
As this happens with probability $9/10$, we can also ensure that this is the case when collisions do occur without sacrificing correctness.
This completes our proof.
\end{proof}

\medskip

We will now give the ``hard'' instance of the testing problem
for algorithms that only consider the ordering of the samples.
We will first describe a construction that works
for $\epsilon =  \Omega(k^{-1/6})$.
We define a mini-bucket to be a segment $I$, which can be divided into three subsegments $I_1,I_2,I_3$
in that order so that $p(I_1)=p(I_3)=\epsilon/(2k)$, $p(I_2)=0$,  and
$q(I_1)=q(I_3)=0,q(I_2)=\epsilon/k$.
We define a bucket to be an interval consisting of a mini-bucket
followed by an interval on which $p=q$ and
on which both $p, q$ have total mass $(1-\epsilon)/k$.
Our distributions for $p$ and $q$ will consist of $k$ consecutive buckets.
See Figure~1 for an illustration.


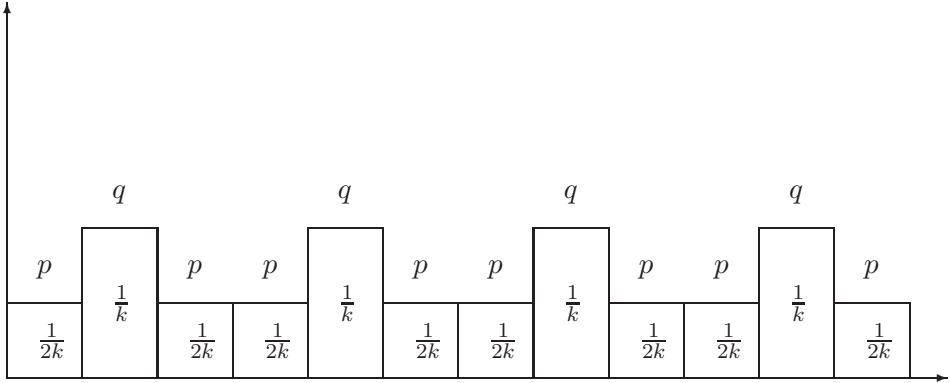
\begin{figure}[h!] \label{fig}
\begin{center}

\setlength{\unitlength}{1cm}
\begin{picture}(13,5)

	
	\put(0,0){\vector(1,0){12.5}}
	\put(0,0){\vector(0,1){5}}
	\multiput(0,0)(1,0){13}%
		{\line(0,1){1}}

	\multiput(0,1)(3,0){4}%
		{\line(1,0){1}}
		
	\multiput(2,1)(3,0){4}%
		{\line(1,0){1}}

	\multiput(1,0)(3,0){4}%
		{\line(0,2){2}}
	\multiput(2,0)(3,0){4}%
		{\line(0,2){2}}

	\multiput(1,2)(3,0){4}%
		{\line(1,0){1}}
	
		
	\multiput(0.4,1.4)(3,0){4}%
		{$p$}
		
	\multiput(2.4,1.4)(3,0){4}%
		{$p$}
		
	\multiput(1.4,2.4)(3,0){4}%
		{$q$}
		
	\multiput(0.4,0.4)(3,0){4}%
		{$\frac{1}{2k}$}
	
	\multiput(2.4,0.4)(3,0){4}%
		{$\frac{1}{2k}$}
		
	\multiput(1.4,0.9)(3,0){4}%
		{$\frac{1}{k}$}

\end{picture}

\end{center}
\caption{$\phi=\frac{1}{2}p + \frac{1}{2}q$ when $\epsilon=1$}\label{fig:p_plus_q}
\end{figure}

Next consider partitioning the domain
into macro-buckets each of which is a union of buckets of total mass $\Theta(1/m)$.
Note that these distributions have $\mathcal{A}_{2k+1}$ distance of $2\epsilon$.
An important fact to note is the following:
\begin{obs} \label{obs}
If zero, one or two draws are made randomly and independently from $(p+q)/2$ on a mini-bucket, then the distribution of which of $p$ or $q$ the samples came from and their relative ordering is indistinguishable from the case where $p=q$.
\end{obs}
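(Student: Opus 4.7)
The plan is to verify, for each of the three cases $r \in \{0,1,2\}$ of samples landing in the mini-bucket, that the joint distribution of the samples' labels (from $p$ or from $q$) together with their left-to-right ordering inside the mini-bucket matches the corresponding distribution in the reference scenario in which both distributions were replaced by the common mixture $(p+q)/2$, so that labels become i.i.d.\ uniform $\{p,q\}$ and independent of location. Because samples outside the mini-bucket are drawn independently of those inside, and their joint distribution is unaffected by swapping $p$ and $q$ for the common mixture on the mini-bucket alone, matching the within-mini-bucket joint distribution is sufficient for the claim about what the tester sees.

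The key design feature of the mini-bucket is that $p$ and $q$ assign it the same total mass, $\epsilon/k$, with $p$'s mass split evenly between the outer sub-intervals $I_1$ and $I_3$, and $q$'s mass concentrated on the middle sub-interval $I_2$. The $r=0$ case is vacuous. For $r=1$, conditional on a single draw landing in the mini-bucket, its label is $p$ with probability $p(\text{mini-bucket})/(p(\text{mini-bucket})+q(\text{mini-bucket})) = 1/2$, matching the reference. For $r=2$, the plan is to condition on the pair $(n_p,n_q)$ giving the number of samples from $p$ and from $q$; since the two masses are equal, these take the values $(2,0),(1,1),(0,2)$ with probabilities $1/4, 1/2, 1/4$. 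The $(2,0)$ and $(0,2)$ cases force ordered label pairs $(p,p)$ and $(q,q)$, each with total probability $1/4$. In the $(1,1)$ case the $q$-sample lies in $I_2$, while the $p$-sample is equally likely to lie in $I_1$ (giving ordered labels $(p,q)$) or $I_3$ (giving $(q,p)$), contributing $1/4$ to each of the two mixed orderings. Summing, the ordered label distribution is uniform over the four possibilities $(p,p),(p,q),(q,p),(q,q)$, exactly as in the reference scenario.

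The only substantive step is the two-sample analysis, and the point to highlight is that the symmetric placement of $p$'s mass on $I_1$ and $I_3$ sandwiching $q$'s mass on $I_2$ is precisely what makes $(p,q)$ and $(q,p)$ equally likely when $(n_p,n_q)=(1,1)$. An asymmetric construction---say, one with all of $p$'s mass to the left of $I_2$---would bias the ordered label distribution away from uniform even on only two samples, so this left--right symmetry (together with the matched total masses) is the structural property that drives the observation, and hence the lower-bound construction that follows.
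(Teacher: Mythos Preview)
Your proof is correct. The paper states this as an observation without proof, so there is nothing to compare against; your case analysis on $r\in\{0,1,2\}$ is exactly the natural verification, and your closing remark that the equal split $p(I_1)=p(I_3)$ is what forces the ordered pairs $(p,q)$ and $(q,p)$ to be equiprobable in the $(n_p,n_q)=(1,1)$ case correctly isolates the structural point behind the construction.
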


To prove the lower bound for the algorithm $A'$,
which is only allowed to look at the ordering of samples. 
We let $X$ be a random variable that is taken to be $0$ or $1$ each with probabilty $1/2.$
When $X=0$ we define $p$ and $q$ as above with mini-buckets,  macro-buckets
and regular buckets as described.
When $X=1$, we let $p=q$ and define mini-buckets
to have total mass $\epsilon/k$ for each of $p$ and $q$,
buckets to have total mass $1/k$ each,
and we combine buckets into macro-buckets as in the $X=0$ case.

Let $Y$ be the distribution on the (ordered) sequences,
obtained by drawing $m'=\Poi(m)$ samples from $p$
and $m''=\Poi(m)$ samples from $q$, with $p$ and $q$ given by $X$.
We are interested in bounding the mutual information between $X$ and $Y$,
since it must be $\Omega(1)$ if the algorithm is going to succeed with probability bounded away from $1/2.$
We show the following:

\begin{theorem}\label{sharedInformationThm}
We have that $I(X:Y) = O(m^5\eps^6/k^4).$
\end{theorem}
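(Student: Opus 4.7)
The plan is to bound the mutual information by a chi-squared divergence and then exploit the product structure that the two hypotheses induce on the sample configuration. Let $P_0$ and $P_1$ denote the laws of the ordered label sequence $Y$ under $X = 0$ and $X = 1$. Using the identity $I(X;Y) = \mathrm{JS}(P_0, P_1)$ for binary uniform $X$, together with the standard inequality $\mathrm{JS}(P_0,P_1) \le \tfrac{1}{2}\chi^2(P_0,P_1)$, it suffices to bound $\chi^2(P_0, P_1)$ by $O(m^5\eps^6/k^4)$.

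The key structural fact is that the joint vector of counts $N = (N_{p,j}, N_{q,j})_{j}$ (across the $k$ mini-buckets, and analogously across the flat bucket-portions) has exactly the same law under $X = 0$ and $X = 1$: in both cases, each $N_{p,j}$ and $N_{q,j}$ is an independent $\mathrm{Poi}(m\eps/k)$, and the flat regions carry independent Poissons as well, since $p$ and $q$ have matching masses on every region under both hypotheses. Because $P_0$ and $P_1$ share this mixing measure, one has the upper bound
\[
1 + \chi^2(P_0, P_1) \;\le\; \mathbb{E}_N\!\left[\prod_{j=1}^{k} \bigl(1 + \chi^2_{j}(N_j)\bigr)\right],
\]
where $\chi^2_j(N_j)$ is the chi-squared divergence between the two conditional distributions of the label-sequence inside mini-bucket $j$ given $N_j$; flat regions contribute factors of $1$ since $p \equiv q$ there in both cases. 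Using independence of $N_j$ across mini-buckets, the right-hand side collapses to $(1+\bar\chi)^k$ where $\bar\chi := \mathbb{E}[\chi^2_{1}(N_1)]$, so in particular $\chi^2(P_0,P_1) \le e^{k\bar\chi} - 1 \le 2k\bar\chi$ provided $k\bar\chi = O(1)$.

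To estimate $\bar\chi$ we invoke the Observation: the conditional chi-squared in a mini-bucket is identically zero whenever $N_{p,j}+N_{q,j}\le 2$. Hence $\bar\chi$ reduces to an expectation over configurations $(N_p, N_q) = (a,b)$ with $a+b\ge 3$, each weighted by the Poisson probability $\Pr[(N_p,N_q) = (a,b)] = O((m\eps/k)^{a+b})$. Writing out the two conditional laws on $\binom{a+b}{a}$ arrangements explicitly (under $X = 1$ this is the uniform distribution; under $X = 0$ it is the distribution of $p^{a_1}q^{b}p^{a_3}$ with $a_1 \sim \mathrm{Binom}(a, \tfrac12)$) and combining with the small-$\eps$ structure of the construction yields the desired estimate $\bar\chi = O\!\bigl((m\eps/k)^4\,\eps^2\bigr)$, after which multiplying by $k$ gives $\chi^2(P_0,P_1) = O(m^5\eps^6/k^4)$.

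The main obstacle is Step~3, i.e.\ obtaining the sharp leading-order control on $\bar\chi$. A naive expansion noting that the $(N_p,N_q)=(1,2)$ configuration already produces a constant-order chi-squared with probability $\Theta\bigl((m\eps/k)^3\bigr)$ would only yield $\bar\chi = O\bigl((m\eps/k)^3\bigr)$ and hence the weaker $\chi^2(P_0,P_1)=O(m^3\eps^3/k^2)$, which would not match the desired lower bound. The improvement to $O((m\eps/k)^4\,\eps^2)$ must therefore come from cancellations between the contributions of paired configurations — for example by symmetrizing the construction across mini-buckets so that the ``$p$-outside'' and ``$q$-outside'' patterns appear with equal probability, which is exactly the balance suggested by the phrase ``when the parameters are balanced correctly'' in the informal discussion preceding the theorem. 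Once this symmetrization is in place, the first two non-trivial orders in $\lambda = 2m\eps/k$ vanish and the leading surviving term carries the extra $\eps^2$ factor that measures the total mass of the discriminating subregions, giving the stated bound.
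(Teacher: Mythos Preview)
Your chi-squared factorization over mini-buckets cannot deliver the stated bound, and the ``cancellation'' you propose in the last paragraph does not occur. For the construction as given in the paper (every mini-bucket is the fixed pattern $p$--$q$--$p$), the conditional chi-squared in a single mini-bucket given $(N_p,N_q)=(1,2)$ is a strictly positive constant: under $X=0$ the ordered string is $pqq$ or $qqp$ with probability $1/2$ each and never $qpq$, whereas under $X=1$ all three orderings are equiprobable. Hence $\bar\chi=\Theta(\lambda^3)$ with $\lambda:=m\eps/k$, and your bound becomes $\chi^2(P_0,P_1)\le (1+\bar\chi)^k-1$ with $k\bar\chi=\Theta(m^3\eps^3/k^2)$. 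At $m\asymp k^{4/5}\eps^{-6/5}$ this is $\Theta(k^{2/5}\eps^{-3/5})\to\infty$, so you never reach the linearization regime, and even if you did you would only obtain the weaker lower bound $m=\Omega(k^{2/3}/\eps)$. Your proposed fix of symmetrizing each mini-bucket to be $p$--$q$--$p$ or $q$--$p$--$q$ with equal probability both changes the construction the theorem is about and fails anyway: the mixed conditional law at $(1,2)$ then puts mass $(1/4,1/2,1/4)$ on $(pqq,qpq,qqp)$, which still has constant chi-squared against the uniform $(1/3,1/3,1/3)$.

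What the paper does instead is the step you are missing: it does \emph{not} factor at the mini-bucket scale. It introduces \emph{macro-buckets} of total mass $\Theta(1/m)$, each containing $\Theta(k/m)$ mini-buckets, augments $Y$ with the macro-bucket labels to get $Y'=(Y'_i)_i$, and uses $I(X:Y)\le\sum_i I(X:Y'_i)$. Inside a macro-bucket with $\ell$ samples, the total-variation distance between the $X=0$ and $X=1$ conditional laws is bounded by the probability that \emph{some} mini-bucket receives three or more samples, which is $O\bigl(\ell^3 (k/m)\lambda^3\bigr)=O(\ell^3 m^2\eps^3/k^2)$. Squaring this quantity \emph{before} summing over macro-buckets is exactly what produces the extra factor: one gets $I(X:Y'_i)=O(m^4\eps^6/k^4)$ per macro-bucket and then $O(m^5\eps^6/k^4)$ after summing over the $O(m)$ macro-buckets. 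In short, the gain is not a cancellation within a mini-bucket but an aggregation effect: square the small TV at the macro-bucket scale, not the chi-squared at the mini-bucket scale.
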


\begin{proof}
We begin with a couple of definitions. Let $Y'$ denote $(Y,\alpha)$,
where $\alpha$ is the information about which draws come from which macro-bucket.
$Y'$ consists of $Y'_i$, the sequence of samples coming from the $i$-th macro-bucket.
Note that $$ I(X:Y) \ls I(X:Y') \ls \sum_{i=1}^{O(m)}I(X:Y_i') \;.$$

We will now estimate $I(X:Y_i')$. We claim that it is $O(\frac{m^4\epsilon^6}{k^4})$ for each $i$.
This would cause the sum to be small enough and give our theorem.
We have that, $$I(X:Y_i') = \E_y\left[O\left(1 - \frac{\Pr(Y'_i = y| X=0)}{\Pr(Y'_i = y| X=1)}\right)^2\right].$$
We then have that
$$I(X:Y_i') = \sum_{\ell =0}^{\infty}\sum_{y:|y| = \ell} \frac{O(1)^\ell}{\ell!}O\left(1 - \frac{\Pr(Y'_i = y| X=0,|y|=\ell)}{\Pr(Y'_i = y| X=1,|y|=\ell)}\right)^2.$$
We note that if $X=1,|y|=\ell$ that any of the $2^\ell$ possible orderings are equally likely.
On the other hand, if $X=0$, this also holds in an approximate sense.
To show this, first consider picking which mini-buckets our $\ell$ draws are from.
If no three land in the same mini-bucket, then Observation~\ref{obs} implies that all orderings are equally likely.
Therefore, the statistical distance between $Y'_i|X=0,|y|=\ell$ and $Y'_i|X=1,|y|=\ell$ is at most the probability
that some three draws come from the same mini-bucket.
This is in turn at most the expected number of triples that land in the same mini-bucket,
which is equal to $\binom{\ell}{3}$ times the probability that a particular triple does.
The probability of landing in a particular mini-bucket is $O(m\epsilon/k)^3$.
By definition, there are $O(m/k)$ mini-buckets in a macro-bucket, so this probability is $O(\ell^3 \epsilon^3(m/k)^2)$. Therefore, we have that

\begin{align*}
I(X:Y_i') & = \sum_\ell \frac{O(1)^\ell}{\ell!} \sum_{y:|y|=\ell} O(4^{\ell})\left(\Pr(Y'_i = y| X=0,|y|=\ell) - \Pr(Y'_i = y| X=1,|y|=\ell)\right)^2\\
& \leqslant \sum_\ell \frac{O(1)^\ell}{\ell!} \left(\sum_{y:|y|=\ell}\left|\Pr(Y'_i = y| X=0,|y|=\ell) - \Pr(Y'_i = y| X=1,|y|=\ell) \right|\right)^2\\
& = \sum_\ell \frac{O(1)^\ell}{\ell!}O\left(\ell^6 \epsilon^6m^4/k^4 \right)\\
& = \frac{m^4}{k^4} \sum_\ell \frac{O(1)^\ell \ell^6 \eps^6}{\ell!}\\
& = O\left(\frac{m^4\epsilon^6}{k^4}\right).
\end{align*}
This completes our proof.
\end{proof}

The above construction only works when $k\geqslant m$,
or equivalently, when $\epsilon  = \Omega (k^{-1/6})$. When $\epsilon$ is small,
we need a slightly different construction. We will similarly split our domain into mini-buckets
and macro-buckets and argue based on shared information.
Once again we define two distributions $p$ and $q$, though this time the distributions
themselves will need to be randomized.
Given $k$ and $\epsilon$, we begin by splitting the domain into $k$ macro-buckets.
Each macro-bucket will have mass $1/k$ under both $p$ and $q$.

First pick a global variable $X$ to be either $0$ or $1$ with equal probability.
If $X=1$ then we will have $p=q$ and if $X=0$, $\|p-q\|_{\mathcal{A}_{2k+1}}=\epsilon.$
For each macro-bucket, pick an $x$ uniformly in $[0,(1-\epsilon)/k]$.
The macro-bucket will consist of an interval on which $p=q$ with mass $x$ (for each of $p, q$),
followed by a mini-bucket, followed by an interval of mass $(1-\epsilon)/k-x$ on which $p=q$.
The mini-bucket is an interval of mass $\epsilon/k$ under either $p$ or $q$.
If $X=1$, we have $p=q$ on the mini-bucket.
If $X=0$, the mini-bucket consists of an interval of mass $\epsilon/(2k)$
under $q$ and $0$ under $p$, an interval of mass $\epsilon/k$
under $p$ and $0$ under $q$, and then another interval
of mass $\epsilon/(2k)$ under $q$ and $0$ under $p$.

We let $Y$ be the random variable associated with the ordering of elements
from a set of $\textrm{Poi}(m)$ draws from each of $p$ and $q$. We show:
\begin{theorem}\label{sharedInformationThm2}
If $m \eps =O( k)$, $\log(mk/\eps) =O( \eps^{-1})$, and $k=O(m)$, 
with implied constants sufficiently small,
then $I(X:Y) = O(m^5\eps^6/k^4).$
\end{theorem}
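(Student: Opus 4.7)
The plan is to mirror the structure of the proof of Theorem~\ref{sharedInformationThm}, accounting for the extra source of randomness given by the now-random mini-bucket positions. Let $Y_i'$ denote the subsample of $Y$ falling into the $i$-th macro-bucket together with its $p/q$-labels; since the different macro-buckets are independent, $I(X;Y)\le\sum_{i=1}^{k}I(X;Y_i')$. For each macro-bucket $i$, let $L=|Y_i'|$, let $W$ be the number of samples in the mini-bucket, and let $J$ be the number of samples that land to the left of the mini-bucket. Under both $X=0$ and $X=1$, the triple $(L,W,J)$ has the same distribution: $L\sim\Poi(2m/k)$, $W\sim\Poi(2m\epsilon/k)$ is independent of $L$, and after marginalizing the uniform random location of the mini-bucket within the macro-bucket, $J$ conditional on $(L,W)$ is uniform on $\{0,1,\ldots,L-W\}$. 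Conditional on $(L,W,J)$, the label sequence $Y_i'$ decomposes as $J$ i.i.d.\ uniform $\{p,q\}$-labels, then a block of $W$ labels whose joint distribution depends on $X$, then $L-W-J$ more i.i.d.\ uniform labels; by Observation~\ref{obs} the middle block has the same distribution under both hypotheses whenever $W\le 2$.

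Using $I(X;Y_i')\le\tfrac12\chi^2(P_0,P_1)$ and decomposing $\chi^2$ by conditioning on $L$, it suffices to bound $\chi^2(P_0^L,P_1^L)$ for each $L$ and then integrate against the Poisson distribution of $L$. Writing $\Delta^w(y'):=\Pr[\text{mini-bucket labels}=y'\mid W=w,\,X=0]-2^{-w}$ for $y'\in\{p,q\}^w$, the decomposition above yields the explicit formula $P_0^L(y)-P_1^L(y)=\sum_{w\ge 3}\Pr(W=w\mid L)\cdot\frac{2^{-(L-w)}}{L-w+1}\sum_{j=0}^{L-w}\Delta^w(y_{[j+1,j+w]})$ for each $y\in\{p,q\}^L$. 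Squaring, summing over $y$, and multiplying by $2^L$ writes $\chi^2(P_0^L,P_1^L)$ as a weighted sum of inner products $\sum_y\Delta^w(y_{[j+1,j+w]})\Delta^{w'}(y_{[j'+1,j'+w']})$; these vanish whenever the two windows are disjoint because $\sum_{y'}\Delta^w(y')=0$, and otherwise are bounded in absolute value by $2^{L-w}\sum_{y'}\Delta^w(y')^2=O(2^{L-w})$.

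The main obstacle is to avoid the naive estimate $\chi^2(P_0^L,P_1^L)\le 2^L(2\dtv)^2=O(2^L L^6\epsilon^6)$ that would follow from $\dtv\le\Pr(W\ge 3\mid L)=O(L^3\epsilon^3)$, whose exponential factor $2^L$ is catastrophic once $L$ can be as large as $m/k$. The rescue comes from the uniform marginalization of $J$: the prefactor $1/(L-w+1)$ in the explicit formula contributes $1/(L-w+1)^2$ after squaring, whereas the number of nonzero inner products (those with $|j-j'|<w$) is only $O(L-w+1)$, yielding a net saving of $1/(L-w+1)$. For the dominant $w=3$ term with $\Pr(W=3\mid L)=O(L^3\epsilon^3)$, this gives a per-$L$ contribution of $O(L^5\epsilon^6)$; higher $w\ge 4$ terms form a geometric series in $(L\epsilon)^2$ which converges thanks to $m\epsilon=O(k)$, and the Poisson tail of $L$ above a threshold where the argument breaks down (and only the worst-case bound $\chi^2\le 2^L$ survives) is controlled using the hypothesis $\log(mk/\epsilon)=O(\epsilon^{-1})$. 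Integrating $O(L^5\epsilon^6)$ against $L\sim\Poi(2m/k)$ gives $O(\epsilon^6(m/k)^5)$ per macro-bucket (using $k=O(m)$ so that $\E[L^5]$ is dominated by the $(m/k)^5$ term of the Touchard polynomial), and summing over the $k$ macro-buckets yields $I(X;Y)=O(m^5\epsilon^6/k^4)$, as claimed.
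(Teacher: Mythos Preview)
Your argument is correct and follows the same route as the paper: reduce to a single macro-bucket, bound $I(X:Y_i')$ by a $\chi^2$ against the uniform law $P_1^L=2^{-L}$, write $P_0^L-P_1^L$ as a sum over $w\ge 3$ and over the uniformly random mini-bucket position, extract the $1/(L-w+1)$ saving, integrate against $L\sim\Poi(2m/k)$, and control the large-$L$ tail using the hypothesis $\log(mk/\eps)=O(\eps^{-1})$. The paper obtains the same saving by bounding the variance of the substring counts $N_{r,s}$ over a uniformly random $s$, which is exactly your ``disjoint windows are orthogonal'' observation written in different notation; one small slip is that $W$ is not independent of $L$ (rather $W$ and $L-W$ are independent Poissons), but since you correctly use the conditional $\Pr(W=w\mid L)=O((L\eps)^w)$ thereafter, nothing in the estimate is affected.
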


Note that the above statement differs from Theorem \ref{sharedInformationThm} in that $X$ and $Y$ are defined differently.

\begin{proof}
Once again, we let $Y'$ be $Y$ along with the information of which draws came from which macro-bucket, and let $Y_i'$ be the information of the draws from the $i$-th
macro-bucket along with their ordering. It suffices for us to show that $I(X:Y_i')=O(m^5\epsilon^6k^{-5})$ for each $i$ (as now there are only $k$ macro-buckets rather than $m$).

Let $s$ be a string of $\ell$ ordered draws from $p$ and $q$.
In particular, we may consider $s$ to be a string $s_1s_2\ldots s_\ell$,
where $s_i\in\{ p, q \}$. We wish to consider the probability that $Y_i'=s$
under the conditions that $X=0$ or that $X=1$. In order to do this, we further
condition on which elements of $s$ were drawn from the mini-bucket.
For $1\leqslant a\leqslant b\leqslant \ell$ we consider the probability that not only
did we obtain sequence $s$, but that the draws $s_a,\ldots,s_b$
were exactly the ones coming from the mini-bucket within this macro-bucket.
Let $h$ denote the ordered string coming from elements drawn from the mini-bucket
and $M$ the ordered sequence of strings coming from elements not drawn from the mini-bucket.
The probability of the event in question is then
$$
\Pr(h=s_a\ldots s_b)\Pr(M=s_1\ldots s_{a-1}s_{b+1}\ldots s_\ell)\Pr(\textrm{the mini-bucket is placed between }s_{a-1}\textrm{ and }s_{b+1}).
$$
Note that the mini-bucket can be thought of as being randomly and uniformly inserted
within an interval of length $(1-\epsilon)/k$
and that this is equally likely to be inserted between any pair of elements of $M$.
Thus, the probability of the third term in the product is exactly $1/(\ell+a-b)$.
The second probability is the probability that $\ell+a-b-1$ elements are drawn
from the complement of the mini-bucket times $2^{-(\ell+a-b+1)}$,
as draws from $p$ and $q$ are equally likely.
Thus, letting $t=b-a+1$ (i.e., the number of elements in the mini-bucket), we have that
\begin{align*}
\Pr(Y_i'=s) = e^{-m/k}\sum_{t=0}^\ell \left(\frac{\left(\frac{m(1-\epsilon)}{2k} \right)^{\ell-t}}{(\ell-t)!}\right)\left(\frac{\left(\frac{m\epsilon}{k} \right)^t}{t!} \right)\left(\frac{1}{\ell-t} \right)\sum_a \Pr(h=s_a\ldots s_{a+t-1}:|h|=t).
\end{align*}
Note that this equality holds even after conditioning upon $X$. We next simplify this expression further by grouping together terms in the last sum based upon the value of the substring $s_a\ldots s_{a+t-1}$, which we call $r$. We get that
\begin{align*}
\Pr(Y_i'=s) = e^{-m/k}\sum_{t=0}^\ell \left(\frac{\left(\frac{m(1-\epsilon)}{2k} \right)^{\ell-t}}{(\ell-t)!}\right)\left(\frac{\left(\frac{m\epsilon}{k} \right)^t}{t!} \right)\left(\frac{1}{\ell-t} \right)\sum_{|r|=t} \Pr(h=r:|h|=t)N_{r,s},
\end{align*}
where $N_{r,s}$ is the number of occurrences of $r$ as a substring of $s$.

Next, we wish to bound
\begin{equation}\label{condTotVarEq}
\sum_{|s|=\ell}|\Pr(Y_i'=s:X=0)-\Pr(Y_i'=s:X=1)|^2.
\end{equation}
By the above formula this is at most
\begin{eqnarray*}
e^{-2m/k}\sum_{|s|=\ell} \sum_{t=0}^\ell
\left(\frac{\left(\frac{m(1-\epsilon)}{2k} \right)^{\ell-t}}{(\ell-t)!}\right)
\left(\frac{\left(\frac{m\epsilon}{k} \right)^t}{t!} \right)
\left(\frac{1}{\ell-t} \right) \cdot \\
\left|\sum_{|r|=t} N_{r,s}\left(\Pr(h=r:|h|=t,X=0)- \Pr(h=r:|h|=t,X=1)\right)\right|^2.
\end{eqnarray*}
For fixed values of $t$ we consider the sum
$$
\sum_{|s|=\ell}\left|\sum_{|r|=t} N_{r,s}(\Pr(h=r:|h|=t,X=0)-\Pr(h=r:|h|=t,X=1))\right|^2.
$$
Note that if $t\leqslant 2$ then $\Pr(h=r:|h|=t,X=0)=\Pr(h=r:|h|=t,X=1)$, and so the above sum is $0$. Otherwise, it is at most
$$
\sum_{|s|=\ell}\sum_{|r|=t}|N_{r,s}-(\ell+1-t)/2^t|^2
$$
because $\sum_r \Pr(h=r:|h|=t,X=0) = \sum_r \Pr(h=r:|h|=t,X=1) = 1.$ Note on the other hand that the expectation over random strings $s$ of length $\ell$ of $N_{r,s}-(\ell+1-t)/2^t$ is $0$. Furthermore, the variance of $N_{r,s}$ is easily bounded by $t\ell 2^{-t}$ as whether or not two disjoint substrings of $s$ are equal to $r$ are independent events. Therefore, the above sum is at most
$$
2^\ell 2^t {t\ell 2^{-t}} = 2^\ell t \ell.
$$
Hence, by Cauchy-Schwartz, we have that
$$
\sum_{|s|=\ell}\left|\sum_{|r|=t}N_{r,s}-(\ell+1-t)/2^t\right|^2 \leqslant 2^\ell2^t t\ell.
$$

Therefore, the expression in \eqref{condTotVarEq} is at most
$$
e^{-2m/k}\left(\sum_{t=3}^\ell\left(\frac{\left(\frac{m(1-\epsilon)}{2k} \right)^{\ell-t}}{(\ell-t)!}\right)\left(\frac{O\left(\frac{m\epsilon}{k} \right)^t}{t!} \right)\left(\frac{1}{\ell-t} \right)\left(2^\ell 2^t \ell t\right)^{1/2}\right)^2.
$$
Assuming that $\ell\epsilon$ is sufficiently small, these terms are decreasing exponentially with $t$, and thus this is
$$
O\left(e^{-2m/k}\left(\frac{(m^2/(2k^2))^{\ell}}{(\ell!)^2}\right)\epsilon^6\ell^5 \right).
$$
Now we have that for $N$ a sufficiently small constant times $\eps^{-1}$,
\begin{align*}
I(X:Y_i') & = \sum_s\Pr(Y_i'=s:X=1) O\left(1-\frac{\Pr(Y_i'=s:X=0)}{\Pr(Y_i'=s:X=1)} \right)^2\\
& = \sum_{\ell} \sum_{s:|s|=\ell} e^{m/k} \left(\frac{(m/(2k))^{\ell}}{\ell!}\right)^{-1}O(\Pr(Y_i'=s:X=1)-\Pr(Y_i'=s:X=0))^2\\
& \leqslant  \sum_\ell e^{m/k}\left(\frac{(m/(2k))^{\ell}}{\ell!}\right)^{-1} O\left( \sum_{s:|s|=\ell}|\Pr(Y_i'=s:X=1)-\Pr(Y_i'=s:X=0)|^2\right)\\
& \leqslant \sum_{\ell>N} O\left(\frac{(2m/k)^{\ell}}{\ell!}\right) + \sum_{\ell<N}e^{m/k}\left(\frac{(m/(2k))^{\ell}}{\ell!}\right)^{-1}O\left(e^{-2m/k}\left(\frac{(m^2/(2k^2))^{\ell}}{(\ell!)^2}\right)\epsilon^6\ell^5 \right)\\
& \leqslant \sum_{\ell>N} O\left(\frac{m}{kN}\right)^\ell + \sum_{\ell} O\left(e^{-m/k}\frac{(m/k)^\ell}{\ell!}\epsilon^6\ell^5 \right).
\end{align*}
Since $\frac{m}{kN} \leq \frac{m\epsilon}{k}$ is sufficiently small, 
the first term is at most $(1/2)^N$ which is polynomially small in $mk/\eps$, and thus negligible. 
The second term is the expectation of $\eps^6\ell^5$ for $\ell$ a Poisson random variable with mean $m/k$. 
Thus, it is easily seen to be $O((m/k)^5\eps^6)$. Therefore, we have that $I(X:Y_i')=O(m^5\eps^6k^{-5})$, 
and therefore, $I(X:Y)=O(m^5\eps^6k^{-4})$, as desired.
\end{proof}

We are now ready to complete the proof of our general lower bound.
\begin{theorem}
For any $k>2$, there exists an $N$ so that any algorithm 
that is given sample access to two distributions,
$p$ and $q$ over $[N]$, and can distinguish between the cases $p=q$ and $\|p-q\|_{\mathcal{A}_k}$
with probability at least $2/3$, 
requires at least $\Omega\left(\max\left\{k^{4/5}/\epsilon^{6/5},k^{1/2}/\epsilon^2\right\}\right)$ samples.
\end{theorem}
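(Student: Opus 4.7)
The plan is to combine the Ramsey-theoretic reduction of Theorem~\ref{lb:thm-exist} with the two mutual information bounds (Theorems~\ref{sharedInformationThm} and~\ref{sharedInformationThm2}) through a standard two-point Fano-style argument, and to obtain the $\Omega(k^{1/2}/\epsilon^2)$ branch separately via a reduction from the discrete uniformity testing lower bound. First I would apply Theorem~\ref{lb:thm-exist} to reduce to the order-based setting: for any target support size $n$, any $m$-sample algorithm on $[N]$ for sufficiently large $N$ can be simulated by an $m$-sample algorithm on $[n]$ whose output depends only on the relative ordering and labels of the samples, so it suffices to lower-bound the sample complexity of order-based algorithms for an appropriate $n = \mathrm{poly}(k/\epsilon)$.

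Next I would set up the two-point argument. Let $X \in \{0,1\}$ be uniform, with $X = 1$ corresponding to $p = q$ and $X = 0$ to $\|p-q\|_{\mathcal{A}_{2k+1}} \geq \epsilon$, using the deterministic construction preceding Theorem~\ref{sharedInformationThm} when $\epsilon \gtrsim k^{-1/6}$, and the randomized construction preceding Theorem~\ref{sharedInformationThm2} when $\epsilon \lesssim k^{-1/6}$. Let $Y$ be the ordered, labeled sample sequence produced by the tester. Any tester with success probability at least $2/3$ forces $H(X \mid Y) \leq H(1/3)$, and hence $I(X:Y) \geq 1 - H(1/3) = \Omega(1)$. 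Plugging in the bound $I(X:Y) = O(m^5\epsilon^6/k^4)$ from whichever of the two theorems applies immediately yields $m = \Omega(k^{4/5}/\epsilon^{6/5})$. In the second regime I would also verify, under the contrapositive assumption $m \leq 2 k^{4/5}/\epsilon^{6/5}$, that the hypotheses $m\epsilon = O(k)$, $\log(mk/\epsilon) = O(\epsilon^{-1})$, $k = O(m)$ of Theorem~\ref{sharedInformationThm2} hold; a brief calculation shows this is the case throughout the regime in which $k^{4/5}/\epsilon^{6/5}$ is the dominant bound, namely $\epsilon \gtrsim k^{-3/8}$, and for smaller $\epsilon$ the $k^{1/2}/\epsilon^2$ bound takes over anyway.

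For the $\Omega(k^{1/2}/\epsilon^2)$ branch I would invoke Paninski's~\cite{Paninski:08} lower bound for uniformity testing on support $[k]$: embedding the hard discrete instances as piecewise-constant densities on $k$ equal subintervals of $[0,1]$ makes the $\mathcal{A}_k$ distance equal the discrete $L_1$ distance, and identity testing against a known uniform distribution is a special case of closeness testing, so the bound transfers directly. Taking the maximum of the two bounds, and choosing $N$ larger than both the support size that Paninski's instance requires and the size produced by the Ramsey reduction of Theorem~\ref{lb:thm-exist}, completes the proof. The main conceptual work has already been dispatched by Theorems~\ref{lb:thm-exist}, \ref{sharedInformationThm}, and~\ref{sharedInformationThm2}; the only remaining subtlety is verifying that the parameter regimes of these three ingredients together tile the entire $(k,\epsilon)$ plane of interest, and that the constructions can be realized on a sufficiently fine integer grid for the Ramsey reduction to apply.
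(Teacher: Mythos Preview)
Your proposal is correct and follows essentially the same route as the paper: reduce to order-based testers via Theorem~\ref{lb:thm-exist}, apply the mutual-information bounds of Theorems~\ref{sharedInformationThm} and~\ref{sharedInformationThm2} to the two hard constructions (splitting the regimes at $\epsilon \asymp k^{-1/6}$, equivalently $m \asymp k$), and handle the $k^{1/2}/\epsilon^2$ branch by citing Paninski directly. The paper phrases the case split as $m<k$ versus $m\geq k$ and carries out the discretization to a finite grid of $m^3$ bins explicitly, but these are exactly the steps you flag as the ``remaining subtleties,'' and your parameter checks for the hypotheses of Theorem~\ref{sharedInformationThm2} in the window $k^{-3/8}\lesssim \epsilon \lesssim k^{-1/6}$ are the right ones.
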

\begin{proof}
The lower bound of $k^{1/2}/\epsilon^2$ follows from the known lower bound~\cite{Paninski:08} even in the case
where $q$ is known and $p$ and $q$ have support of size $k$.
It now suffices to consider the case that $\epsilon > k^{-1/2}$ and $m$ a sufficiently small constant times $k^{4/5}\epsilon^{-6/5}.$

Note that by Theorem \ref{lb:thm-exist}, we may assume that the algorithm in question takes $m$ samples from each of $p$ and $q$ and determines its output
based only on the ordering of the samples. We need to show that this is impossible for $N$ sufficiently large.

We note that if we allow $p$ and $q$ to be continuous distributions instead of discrete ones we are already done. If $m<k$, we use our first counter-example construction,
and if $m\geqslant k$ use the second one. If we let $X$ be randomly $0$ or $1$, and set $p=q$ for $X=1$ and $p,q$ as described above when $X=0$,
then by Theorems \ref{sharedInformationThm} and \ref{sharedInformationThm2}, the shared information between $X$ and the output of our algorithm
is at most $O(m^5 \epsilon^6 k^{-4}) = o(1)$, and thus our algorithm cannot correctly determine $X$ with constant probability.

In order to prove our Theorem, we will need to make this work for distributions $p$ and $q$ with finite support size as follows:
By splitting our domain into $m^3$ intervals each of equal mass under $p+q$, we note that the $\mathcal{A}_k$ distance between the distributions is only negligibly affected. Furthermore, with high probability, $m$ samples will have no pair chosen from the same bin. Thus, the distribution on orderings of samples from these discrete distributions are nearly identical to the continuous case, and thus our algorithm would behave nearly identically. This completes the proof.
\end{proof}

\bibliographystyle{alpha}

\nocite{}

\bibliography{allrefs}

\newcommand{\etalchar}[1]{$^{#1}$}
\begin{thebibliography}{CDVV14}

\bibitem[ADJ{\etalchar{+}}11]{DJOP11}
J.~Acharya, H.~Das, A.~Jafarpour, A.~Orlitsky, and S.~Pan.
\newblock Competitive closeness testing.
\newblock {\em Journal of Machine Learning Research - Proceedings Track},
  19:47--68, 2011.

\bibitem[ADLS15]{ADLS15}
J.~Acharya, I.~Diakonikolas, J.~Li, and L.~Schmidt.
\newblock Sample-optimal density estimation in nearly-linear time.
\newblock {\em CoRR}, abs/1506.00671, 2015.

\bibitem[Bat01]{Batu01}
T.~Batu.
\newblock {\em Testing Properties of Distributions}.
\newblock PhD thesis, Cornell University, 2001.

\bibitem[BBBB72]{BBBB:72}
R.E. Barlow, D.J. Bartholomew, J.M. Bremner, and H.D. Brunk.
\newblock {\em Statistical Inference under Order Restrictions}.
\newblock Wiley, New York, 1972.

\bibitem[BD14]{BalDoss14}
F.~Balabdaoui and C.~R. Doss.
\newblock {Inference for a Mixture of Symmetric Distributions under
  Log-Concavity}.
\newblock Available at http://arxiv.org/abs/1411.4708, 2014.

\bibitem[BDKR02]{BDKR:02}
T.~Batu, S.~Dasgupta, R.~Kumar, and R.~Rubinfeld.
\newblock The complexity of approximating entropy.
\newblock In {\em {ACM} Symposium on Theory of Computing}, pages 678--687,
  2002.

\bibitem[BFF{\etalchar{+}}01]{BFFKRW:01}
T.~Batu, E.~Fischer, L.~Fortnow, R.~Kumar, R.~Rubinfeld, and P.~White.
\newblock Testing random variables for independence and identity.
\newblock In {\em Proc. 42nd IEEE Symposium on Foundations of Computer
  Science}, pages 442--451, 2001.

\bibitem[BFR{\etalchar{+}}00]{BFR+:00}
T.~Batu, L.~Fortnow, R.~Rubinfeld, W.~D. Smith, and P.~White.
\newblock Testing that distributions are close.
\newblock In {\em {IEEE} Symposium on Foundations of Computer Science}, pages
  259--269, 2000.

\bibitem[BFR{\etalchar{+}}13]{Batu13}
T.~Batu, L.~Fortnow, R.~Rubinfeld, W.~D. Smith, and P.~White.
\newblock Testing closeness of discrete distributions.
\newblock {\em J. ACM}, 60(1):4, 2013.

\bibitem[Bir87a]{Birge:87}
L.~Birg\'e.
\newblock {Estimating a density under order restrictions: Nonasymptotic minimax
  risk}.
\newblock {\em Annals of Statistics}, 15(3):995--1012, 1987.

\bibitem[Bir87b]{Birge:87b}
L.~Birg\'e.
\newblock {On the risk of histograms for estimating decreasing densities}.
\newblock {\em Annals of Statistics}, 15(3):1013--1022, 1987.

\bibitem[BKR04]{BKR:04}
T.~Batu, R.~Kumar, and R.~Rubinfeld.
\newblock Sublinear algorithms for testing monotone and unimodal distributions.
\newblock In {\em {ACM} Symposium on Theory of Computing}, pages 381--390,
  2004.

\bibitem[Bru58]{Brunk:58}
H.~D. Brunk.
\newblock On the estimation of parameters restricted by inequalities.
\newblock {\em The Annals of Mathematical Statistics}, 29(2):pp. 437--454,
  1958.

\bibitem[BRW09]{BRW:09aos}
F.~Balabdaoui, K.~Rufibach, and J.~A. Wellner.
\newblock Limit distribution theory for maximum likelihood estimation of a
  log-concave density.
\newblock {\em The Annals of Statistics}, 37(3):pp. 1299--1331, 2009.

\bibitem[BW07]{BW07aos}
F.~Balabdaoui and J.~A. Wellner.
\newblock Estimation of a $k$-monotone density: Limit distribution theory and
  the spline connection.
\newblock {\em The Annals of Statistics}, 35(6):pp. 2536--2564, 2007.

\bibitem[BW10]{BW10sn}
F.~Balabdaoui and J.~A. Wellner.
\newblock Estimation of a $k$-monotone density: characterizations, consistency
  and minimax lower bounds.
\newblock {\em Statistica Neerlandica}, 64(1):45--70, 2010.

\bibitem[CDSS13]{CDSS13}
S.~Chan, I.~Diakonikolas, R.~Servedio, and X.~Sun.
\newblock Learning mixtures of structured distributions over discrete domains.
\newblock In {\em SODA}, 2013.

\bibitem[CDSS14]{CDSS14}
S.~Chan, I.~Diakonikolas, R.~Servedio, and X.~Sun.
\newblock Efficient density estimation via piecewise polynomial approximation.
\newblock In {\em STOC}, pages 604--613, 2014.

\bibitem[CDVV14]{CDVV14}
S.~Chan, I.~Diakonikolas, P.~Valiant, and G.~Valiant.
\newblock Optimal algorithms for testing closeness of discrete distributions.
\newblock In {\em SODA}, pages 1193--1203, 2014.

\bibitem[CFS10]{conlon2010hypergraph}
D.~Conlon, J.~Fox, and B.~Sudakov.
\newblock Hypergraph ramsey numbers.
\newblock {\em Journal of the American Mathematical Society}, 23(1):247--266,
  2010.

\bibitem[CS13]{ChenSam13}
Y.~Chen and R.~J. Samworth.
\newblock Smoothed log-concave maximum likelihood estimation with applications.
\newblock {\em Statist. Sinica}, 23:1373--1398, 2013.

\bibitem[CT04]{ChanTong:04}
K.S. Chan and H.~Tong.
\newblock Testing for multimodality with dependent data.
\newblock {\em Biometrika}, 91(1):113--123, 2004.

\bibitem[DDS{\etalchar{+}}13]{DDSVV13}
C.~Daskalakis, I.~Diakonikolas, R.~Servedio, G.~Valiant, and P.~Valiant.
\newblock Testing $k$-modal distributions: Optimal algorithms via reductions.
\newblock In {\em SODA}, pages 1833--1852, 2013.

\bibitem[DKN15]{DKN15a}
I.~Diakonikolas, D.~Kane, and V.~Nikishkin.
\newblock Testing identity of structured distributions.
\newblock In {\em SODA}, pages 1841--1854, 2015.

\bibitem[DR09]{DumbgenRufibach:09}
L.~D\:{u}mbgen and K.~Rufibach.
\newblock Maximum likelihood estimation of a log-concave density and its
  distribution function: Basic properties and uniform consistency.
\newblock {\em Bernoulli}, 15(1):40--68, 2009.

\bibitem[DW13]{DossW13}
C.~R. Doss and J.~A. Wellner.
\newblock {Global Rates of Convergence of the MLEs of Log-concave and
  $s$-concave Densities}.
\newblock Available at http://arxiv.org/abs/1306.1438, 2013.

\bibitem[Fou97]{Fougeres:97}
A.-L. Foug\`{e}res.
\newblock Estimation de densit\'{e}s unimodales.
\newblock {\em Canadian Journal of Statistics}, 25:375--387, 1997.

\bibitem[GGR98]{GGR98}
O.~Goldreich, S.~Goldwasser, and D.~Ron.
\newblock Property testing and its connection to learning and approximation.
\newblock {\em Journal of the ACM}, 45:653--750, 1998.

\bibitem[GJ14]{GJ:14}
P.~Groeneboom and G.~Jongbloed.
\newblock {\em Nonparametric Estimation under Shape Constraints: Estimators,
  Algorithms and Asymptotics}.
\newblock Cambridge University Press, 2014.

\bibitem[GR00]{GR00}
O.~Goldreich and D.~Ron.
\newblock On testing expansion in bounded-degree graphs.
\newblock Technical Report TR00-020, Electronic Colloquium on Computational
  Complexity, 2000.

\bibitem[Gre56]{Grenander:56}
U.~Grenander.
\newblock On the theory of mortality measurement.
\newblock {\em Skand. Aktuarietidskr.}, 39:125--153, 1956.

\bibitem[Gro85]{Groeneboom:85}
P.~Groeneboom.
\newblock Estimating a monotone density.
\newblock In {\em Proc. of the Berkeley Conference in Honor of Jerzy Neyman and
  Jack Kiefer}, pages 539--555, 1985.

\bibitem[GW09]{GW09sc}
F.~Gao and J.~A. Wellner.
\newblock On the rate of convergence of the maximum likelihood estimator of a
  $k$-monotone density.
\newblock {\em Science in China Series A: Mathematics}, 52:1525--1538, 2009.

\bibitem[HP76]{HansonP:76}
D.~L. Hanson and G.~Pledger.
\newblock Consistency in concave regression.
\newblock {\em The Annals of Statistics}, 4(6):pp. 1038--1050, 1976.

\bibitem[HW15]{HW15}
Q.~Han and J.~A. Wellner.
\newblock {Approximation and Estimation of $s$-Concave Densities via Renyi
  Divergences}.
\newblock Available at http://arxiv.org/abs/1505.00379, 2015.

\bibitem[ILR12]{ILR12}
P.~Indyk, R.~Levi, and R.~Rubinfeld.
\newblock {Approximating and Testing $k$-Histogram Distributions in Sub-linear
  Time}.
\newblock In {\em PODS}, pages 15--22, 2012.

\bibitem[JW09]{JW:09}
H.~K. Jankowski and J.~A. Wellner.
\newblock Estimation of a discrete monotone density.
\newblock {\em Electronic Journal of Statistics}, 3:1567--1605, 2009.

\bibitem[KM10]{KoenkerM:10aos}
R.~Koenker and I.~Mizera.
\newblock Quasi-concave density estimation.
\newblock {\em Ann. Statist.}, 38(5):2998--3027, 2010.

\bibitem[KS14]{KimSam14}
A.~K.~H. Kim and R.~J. Samworth.
\newblock {Global rates of convergence in log-concave density estimation}.
\newblock Available at http://arxiv.org/abs/1404.2298, 2014.

\bibitem[LR05]{lehmann2005testing}
E.~L. Lehmann and Joseph~P. Romano.
\newblock {\em Testing statistical hypotheses}.
\newblock Springer Texts in Statistics. Springer, 2005.

\bibitem[LRR11]{LRR11}
R.~Levi, D.~Ron, and R.~Rubinfeld.
\newblock Testing properties of collections of distributions.
\newblock In {\em ICS}, pages 179--194, 2011.

\bibitem[NP33]{NeymanP}
J.~Neyman and E.~S. Pearson.
\newblock On the problem of the most efficient tests of statistical hypotheses.
\newblock {\em Philosophical Transactions of the Royal Society of London.
  Series A, Containing Papers of a Mathematical or Physical Character},
  231(694-706):289--337, 1933.

\bibitem[Pan08]{Paninski:08}
L.~Paninski.
\newblock A coincidence-based test for uniformity given very sparsely-sampled
  discrete data.
\newblock {\em IEEE Transactions on Information Theory}, 54:4750--4755, 2008.

\bibitem[Rao69]{PrakasaRao:69}
B.L.S.~Prakasa Rao.
\newblock Estimation of a unimodal density.
\newblock {\em Sankhya Ser. A}, 31:23--36, 1969.

\bibitem[RS96]{RS96}
R.~Rubinfeld and M.~Sudan.
\newblock Robust characterizations of polynomials with applications to program
  testing.
\newblock {\em SIAM J. on Comput.}, 25:252--271, 1996.

\bibitem[Rub12]{Rub12}
R.~Rubinfeld.
\newblock Taming big probability distributions.
\newblock {\em XRDS}, 19(1):24--28, 2012.

\bibitem[Val11]{PV11sicomp}
P.~Valiant.
\newblock Testing symmetric properties of distributions.
\newblock {\em SIAM J. Comput.}, 40(6):1927--1968, 2011.

\bibitem[VV11]{ValiantValiant:11}
G.~Valiant and P.~Valiant.
\newblock {Estimating the unseen: an $n/\log(n)$-sample estimator for entropy
  and support size, shown optimal via new CLTs}.
\newblock In {\em STOC}, pages 685--694, 2011.

\bibitem[VV14]{VV14}
G.~Valiant and P.~Valiant.
\newblock An automatic inequality prover and instance optimal identity testing.
\newblock In {\em FOCS}, 2014.

\bibitem[Wal09]{Walther09}
G.~Walther.
\newblock Inference and modeling with log-concave distributions.
\newblock {\em Statistical Science}, 24(3):319--327, 2009.

\bibitem[Weg70]{Wegman:70}
E.J. Wegman.
\newblock {Maximum likelihood estimation of a unimodal density. I. and II.}
\newblock {\em Ann. Math. Statist.}, 41:457--471, 2169--2174, 1970.

\end{thebibliography}

\appendix

\section*{Appendix}

\section{Proof of Proposition~\ref{UniformTesterProp}} \label{sec:app}

In this section, we prove Proposition \ref{UniformTesterProp}.
We note that it suffices to attain confidence probability
$2/3$ with $O(\sqrt{k}/\epsilon^2)$ samples, as we can then run $O(\log(1/\delta))$ independent iterations
to boost the confidence to $1-\delta.$
Our starting point is the following Theorem from~\cite{CDVV14}:

\begin{theorem} [\cite{CDVV14}, Proposition 3.1] \label{thm:l2-unif}
For any distributions $p$ and $q$ over $[n]$ such
that $\lVert p \rVert_2 \leqslant \frac{O(1)}{\sqrt{n}}$ and $\lVert q \rVert_2 \leqslant \frac{O(1)}{\sqrt{n}}$
there is a testing algorithm that distinguishes with probability at least $2/3$ the case that $q = p$
from the case that $||q-p||_2 \gs \eps/\sqrt{n}$ when given $O(\sqrt{n}/\eps^2)$
samples from $q$ and $p$.
\end{theorem}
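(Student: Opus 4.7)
The plan is to follow the standard chi-squared / $\ell_2^2$-estimator approach: build an unbiased estimator for $\|p-q\|_2^2$ from Poissonized bin counts, bound its variance using the $L_2$-norm hypothesis, and apply Chebyshev's inequality with a threshold that cleanly separates the two cases.

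First I would Poissonize the sample: instead of taking exactly $m = C\sqrt{n}/\eps^2$ samples from $p$ and $q$ (with $C$ a large absolute constant), draw $\Poi(m)$ samples from each. Letting $X_i$ and $Y_i$ denote the number of samples landing in bin $i$ from $p$ and $q$ respectively, Poissonization ensures that the pairs $(X_i,Y_i)_{i=1}^n$ are mutually independent across $i$, with $X_i\sim\Poi(mp_i)$ and $Y_i\sim\Poi(mq_i)$; this independence is what makes the subsequent variance calculation tractable. I would then form the statistic
$$
Z \;=\; \sum_{i=1}^n \bigl[(X_i-Y_i)^2 - X_i - Y_i\bigr].
$$
A short calculation using $\E[(X_i-Y_i)^2] = m(p_i+q_i) + m^2(p_i-q_i)^2$ gives $\E[Z]=m^2\|p-q\|_2^2$, so that $Z/m^2$ is an unbiased estimator of $\|p-q\|_2^2$.

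Next I would bound $\Var[Z]$. By independence across $i$, $\Var[Z]=\sum_i\Var[(X_i-Y_i)^2-X_i-Y_i]$, and a routine calculation with Poisson factorial moments shows that the $i$-th summand is $O\bigl(m^2(p_i+q_i)^2+m^3(p_i-q_i)^2(p_i+q_i)\bigr)$. Summing yields
$$
\Var[Z]\;=\;O\!\left(m^2\|p+q\|_2^2 \;+\; m^3\sum_{i=1}^n (p_i-q_i)^2(p_i+q_i)\right).
$$
Here I would invoke the hypothesis $\|p\|_2,\|q\|_2=O(1/\sqrt{n})$: it immediately gives $\|p+q\|_2^2 = O(1/n)$, and because $\|\cdot\|_\infty\le\|\cdot\|_2$ on finite-dimensional vectors, $\|p+q\|_\infty\le\|p+q\|_2=O(1/\sqrt{n})$, so $\sum_i(p_i-q_i)^2(p_i+q_i)\le\|p+q\|_\infty\cdot\|p-q\|_2^2=O(\|p-q\|_2^2/\sqrt{n})$. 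Combining, $\Var[Z]=O\bigl(m^2/n+m^3\|p-q\|_2^2/\sqrt{n}\bigr)$.

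Finally, the tester compares $Z$ against the threshold $\tau=\tfrac12 m^2\eps^2/n$. In the completeness case $p=q$, $\E[Z]=0$ and $\sqrt{\Var[Z]}=O(m/\sqrt{n})$, and for $m=C\sqrt{n}/\eps^2$ with $C$ large the choice of $\tau$ dominates $\sqrt{\Var[Z]}$ by a large constant, so Chebyshev yields $|Z|<\tau$ with probability $\ge 5/6$. In the soundness case $\|p-q\|_2\ge\eps/\sqrt{n}$, we have $\E[Z]\ge m^2\eps^2/n = 2\tau$, and both contributions to $\sqrt{\Var[Z]}$, namely $m/\sqrt{n}$ and $m^{3/2}\eps/n^{3/4}$, are $o(\E[Z])$ for this choice of $m$, so $Z>\tau$ with probability $\ge 5/6$. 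A standard $O(\log(1/\delta))$-fold majority-vote amplification then gives the claimed confidence. The main obstacle is the variance bound, specifically controlling $\sum_i(p_i-q_i)^2(p_i+q_i)$ with no pointwise upper bound on $p_i,q_i$; it is precisely the $L_2$-norm hypothesis, through $\|p+q\|_\infty\le\|p+q\|_2=O(1/\sqrt n)$, that produces the factor of $1/\sqrt n$ responsible for the $\sqrt{n}$ (rather than $n$) scaling of the sample complexity.
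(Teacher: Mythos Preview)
The paper does not prove this statement: it is quoted verbatim as Proposition~3.1 of \cite{CDVV14} and used as a black box in the appendix. Your sketch is essentially the argument of \cite{CDVV14} itself---Poissonize, use the unbiased $\chi^2$-type estimator $Z=\sum_i[(X_i-Y_i)^2-X_i-Y_i]$ for $m^2\|p-q\|_2^2$, bound the variance via the $L_2$-norm hypothesis, and threshold---so there is nothing to compare against within the present paper. One small point of phrasing: in the soundness case the second variance contribution is $m^{3/2}\|p-q\|_2/n^{1/4}$, not $m^{3/2}\eps/n^{3/4}$; since $\E[Z]=m^2\|p-q\|_2^2$ scales with $\|p-q\|_2^2$ as well, the ratio $\E[Z]/\sqrt{\Var[Z]}$ is still $\Omega(\sqrt{C})$ uniformly over all $\|p-q\|_2\ge\eps/\sqrt{n}$, so the argument is fine, but the substitution $\|p-q\|_2\to\eps/\sqrt{n}$ should be justified by this monotonicity rather than made silently.
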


Our $\mathcal{A}_k$ testing algorithm for this regime is the following:



\medskip

\fbox{\parbox{6.2in}{
{\bf Algorithm} {Test-Identity-Flat-$\mathcal{A}_k(p, q, n, \eps)$\\}
Input: sample access to distributions $p$ and $q$ over $[n]$ 
with $\|p\|_2,\|q\|_2=O(1/\sqrt{n})$, $k \in \Z_+$ with $2 \le k \le n$, and $\eps>0$.\\
Output: ``YES'' if $q = p$; ``NO'' if $\|q-p\|_{\mathcal{A}_k} \ge \eps.$

\begin{enumerate}

\item Draw samples {$S_1$, $S_2$} of size $m = O(\sqrt{k}/\eps^2)$ from {$q$ and $p$}.

\item By artificially increasing the support if necessary, we can guarantee that $n = k \cdot 2^{j_0}$,
where $j_0  \eqdef \lceil \log_2(1/\eps) \rceil+O(1).$

\item 
Consider the collection
$\{\mathcal{I}^{(j)}\}_{j=0}^{j_0-1}$ of $j_0$ partitions of $[n]$ into intervals;
the partition $\mathcal{I}^{(j)} = (I_i^{(j)})_{i=1}^{\ell_j}$ consists of $\ell_j = k \cdot 2^j$ many intervals with
 $I_i^{(j_0)}$ of length $n/\ell_j+O(1)$, and $I_i^{(j)}$ the union of two adjacent intervals of $I_i^{(j+1)}$.

\vspace{-0.2cm}

  \item For $j=0, 1, \ldots, j_0-1$:
  \vspace{-0.2cm}
	\begin{enumerate}
		\item Consider the reduced distributions $q_r^{\mathcal{I}^{(j)}}$ and $p_r^{\mathcal{I}^{(j)}}$.
		          Use {the samples $S_1$, $S_2$} to simulate samples to { $q_r^{\mathcal{I}^{(j)}}$ and $p_r^{\mathcal{I}^{(j)}}$ }.
		          \vspace{-0.2cm}
		\item  Run {Test-Identity-${L_2}(q_r^{\mathcal{I}^{(j)}}, p_r^{\mathcal{I}^{(j)}}, \ell_j, \eps_j, \delta_j)$}
		for $\eps_j = C \cdot \eps \cdot 2^{3j/8}$ for $C>0$ a sufficiently small constant and $\delta_j =  2^{-j}/6$,
		i.e., test whether { $q_r^{\mathcal{I}^{(j)}} = p_r^{\mathcal{I}^{(j)}}$} versus 
		{$\| q_r^{\mathcal{I}^{(j)}} - p_r^{\mathcal{I}^{(j)}}\|_2 > \gamma_j \eqdef \eps_j / \sqrt{\ell_j}.$}
	\end{enumerate}
	\vspace{-0.2cm}
  \item If all the testers in Step~3(b) output  ``YES'', then output ``YES'';  otherwise output ``NO''.
\end{enumerate}
\vspace{-0.3cm}
}}
\smallskip

Note in the above that when $\epsilon_j>1$, that the appropriate tester requires no samples.
The following proposition characterizes the performance of the above algorithm.

\begin{proposition}\label{prop:ak}
The algorithm {\em Test-Identity-Flat-}$\mathcal{A}_k(p, q, n, \eps)$, on input a sample of size $m = O(\sqrt{k}/\eps^2)$
drawn from distributions $q$ and $p$ over $[n]$ with $\|p\|_2,\|q\|_2=O(1/\sqrt{n})$,
$\eps>0$, and an integer $k$ with $2 \le k \le n$, correctly distinguishes the case that $q = p$
from the case that $\|q-p\|_{\mathcal{A}_k} \ge \eps$, with probability at least $2/3$.
\end{proposition}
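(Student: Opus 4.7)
The plan is to prove the proposition in four steps: sample complexity, verifying the $L_2$-norm precondition of Theorem~\ref{thm:l2-unif} for each reduced pair, completeness, and soundness.

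For \emph{sample complexity}, the $j$-th invocation of Test-Identity-$L_2$ requires $O(\sqrt{\ell_j}/\epsilon_j^2\cdot\log(1/\delta_j))$ samples; substituting $\ell_j = k\cdot 2^j$, $\epsilon_j = C\epsilon\cdot 2^{3j/8}$, and $\delta_j = 2^{-j}/6$ yields $O((\sqrt{k}/\epsilon^2)(j+1)\,2^{-j/4})$, whose sum over $j<j_0$ is $O(\sqrt{k}/\epsilon^2)$. For the \emph{$L_2$-norm precondition}, since the intervals of $\mathcal{I}^{(j)}$ have equal length $n/\ell_j$, Cauchy-Schwarz on each interval gives
\[
\|p_r^{\mathcal{I}^{(j)}}\|_2^2 \,=\, \sum_i p(I_i^{(j)})^2 \,\leq\, (n/\ell_j)\|p\|_2^2 \,=\, O(1/\ell_j),
\]
and similarly for $q$, so the hypothesis of Theorem~\ref{thm:l2-unif} is satisfied at every scale. \emph{Completeness} is immediate: if $p=q$, every pair $(p_r^{\mathcal{I}^{(j)}}, q_r^{\mathcal{I}^{(j)}})$ coincides, each call returns YES with probability $\geq 1-\delta_j$, and a union bound gives overall success probability at least $1 - \sum_{j\geq 0} 2^{-j}/6 > 2/3$.

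The substantive step is \emph{soundness}. Assume $\|p-q\|_{\mathcal{A}_k}\geq \epsilon$ and fix an optimal $k$-partition $\mathcal{I}^*$ with $\sum_{I\in\mathcal{I}^*}\delta_I \geq \epsilon$, where $\delta_I = |(p-q)(I)|$. I would classify each $I$ by its \emph{natural scale} $j(I)$ satisfying $|I|\in[2^{j_0-j(I)}, 2^{j_0-j(I)+1})$, and set $\sigma_j = \sum_{I:\,j(I)=j}\delta_I$, $k_j = |\{I:j(I)=j\}|$, so $\sum_j\sigma_j \geq \epsilon$ and $\sum_j k_j \leq k$. The key geometric claim is that for each scale,
\[
\|p_r^{\mathcal{I}^{(j)}} - q_r^{\mathcal{I}^{(j)}}\|_2 \,=\, \Omega(\sigma_j/\sqrt{k_j}),
\]
since intervals in $\{I : j(I)=j\}$ are nearly aligned with dyadic intervals at scale $j$ and their contributions appear on at most $2k_j$ coordinates of $p_r^{\mathcal{I}^{(j)}} - q_r^{\mathcal{I}^{(j)}}$ (one Cauchy-Schwarz then converts the $L_1$ contribution $\sigma_j$ into the $L_2$ lower bound $\sigma_j/\sqrt{k_j}$). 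Suppose for contradiction $\|p_r^{\mathcal{I}^{(j)}} - q_r^{\mathcal{I}^{(j)}}\|_2 < \gamma_j = C\epsilon\cdot 2^{-j/8}/\sqrt{k}$ for every $j$. Then $\sigma_j/\sqrt{k_j} < C'\epsilon\cdot 2^{-j/8}/\sqrt{k}$ and a second Cauchy-Schwarz gives
\[
\sum_j \sigma_j \,<\, C'\epsilon \sum_j \sqrt{k_j/k}\cdot 2^{-j/8} \,\leq\, C'\epsilon\Bigl(\sum_j k_j/k\Bigr)^{1/2}\Bigl(\sum_j 2^{-j/4}\Bigr)^{1/2} \,=\, O(C'\epsilon),
\]
contradicting $\sum_j \sigma_j \geq \epsilon$ for small enough $C$. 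Hence some scale $j^*$ has $\|p_r^{\mathcal{I}^{(j^*)}} - q_r^{\mathcal{I}^{(j^*)}}\|_2 \geq \gamma_{j^*}$, the inner tester there returns NO with probability $\geq 1-\delta_{j^*}$, and the overall algorithm returns NO with probability at least $2/3$.

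The main obstacle is establishing the key geometric claim rigorously: discrepancies from different $I$ with the same $j(I)=j$ can partially cancel within a single dyadic interval of $\mathcal{I}^{(j)}$, and ``noise'' coming from $\mathcal{I}^*$-intervals of scales $j'\neq j$ (averaged-out at scale $j$ when $j'>j$, or split across several dyadic cells when $j'<j$) must be bounded separately. Both issues are controlled by the hypothesis $\|p\|_2,\|q\|_2 = O(1/\sqrt{n})$, which forces the mass in any subinterval of length $L$ to be $O(\sqrt{L/n})$, so boundary leakage near each of the $O(k)$ endpoints of $\mathcal{I}^*$ loses only a constant factor overall. The exponent $3/8$ in the threshold $\epsilon_j = C\epsilon\cdot 2^{3j/8}$ is precisely calibrated so that the final Cauchy-Schwarz across scales closes tightly while the sample budget still sums to $O(\sqrt{k}/\epsilon^2)$.
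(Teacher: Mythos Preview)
Your handling of sample complexity, the $L_2$ precondition for the inner tester, and completeness is correct and essentially identical to the paper's. The problem is in the soundness step, specifically the ``key geometric claim''
\[
\|p_r^{\mathcal{I}^{(j)}} - q_r^{\mathcal{I}^{(j)}}\|_2 \;=\; \Omega\bigl(\sigma_j/\sqrt{k_j}\bigr),
\]
which is false as stated. The vector $p_r^{\mathcal{I}^{(j)}} - q_r^{\mathcal{I}^{(j)}}$ records the discrepancy of the \emph{entire} signed measure $p-q$ on dyadic cells at scale $j$, not just the portion coming from $\mathcal{I}^*$-intervals with $j(I)=j$. Intervals of $\mathcal{I}^*$ at other scales contribute their full signed discrepancy to those same coordinates and can annihilate the scale-$j$ contribution. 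Concretely, take a single $I_1\in\mathcal{I}^*$ at scale $j$ straddling two adjacent dyadic cells $D_1,D_2$, and two flanking intervals $I_2,I_3\in\mathcal{I}^*$ at scale $j+1$ filling the rest of $D_1,D_2$ with opposite-sign discrepancy; one can arrange $(p-q)(D_1)=(p-q)(D_2)=0$ while $\sigma_j=\delta_{I_1}$ is as large as you like. Then $\|p_r^{\mathcal{I}^{(j)}}-q_r^{\mathcal{I}^{(j)}}\|_2=0$ on those coordinates, yet your argument would infer $\sigma_j/\sqrt{k_j}\lesssim\gamma_j$, which is false. Your proposed remedy via $\|p\|_2,\|q\|_2=O(1/\sqrt{n})$ only bounds the \emph{unsigned mass} $p(J)+q(J)$ on short pieces; it gives no control over cancellation of signed discrepancies from other $\mathcal{I}^*$-intervals, and indeed the paper's soundness proof does not use this hypothesis at all.

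The paper avoids any per-scale lower bound by routing through an auxiliary quantity, the scale-sensitive $L_2$ distance
\[
\|p-q\|_{[k]}^2 \;=\; \max_{\mathcal{I}\in \mathbf{W}_{1/k}}\;\sum_i \frac{\discr^2(I_i)}{\width^{1/8}(I_i)},
\]
and proves two inequalities. First, $\|p-q\|_{[k]}^2 \ge \eps^2/(2k)^{7/8}$ by a single Cauchy--Schwarz starting from the $\mathcal{A}_k$-optimal partition. Second, $\|p-q\|_{[k]}^2 \le 10^8\sum_j (2^jk)^{1/8}\|p_r^{\mathcal{I}^{(j)}}-q_r^{\mathcal{I}^{(j)}}\|_2^2$: for each interval $J$ in the partition \emph{maximizing} $\|\cdot\|_{[k]}$ one picks a dyadic scale so that $J$ contains many dyadic cells, and the crucial step is that the two leftover boundary pieces must each have discrepancy at most $\discr(J)/3$, because otherwise splitting $J$ would strictly increase the scale-sensitive score, contradicting maximality. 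Combining the two inequalities gives $\sum_j 2^{j/8}\|p_r^{\mathcal{I}^{(j)}}-q_r^{\mathcal{I}^{(j)}}\|_2^2 \gtrsim \eps^2/k$, and then a simple pigeonhole over $j$ (summing the hypothetical upper bounds $\gamma_j^2$) yields the desired scale $j^*$. The device that makes this work --- using extremality of the maximizing partition to control boundary terms --- has no counterpart in your length-bucketing scheme, and that is the missing idea.
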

\begin{proof}
First, it is straightforward to verify the claimed sample complexity, as the algorithm only draws samples in Step~1.
Note that the algorithm uses the same set of samples $S_1, S_2$ for all testers in Step~4(b).
Note that it is easy to see that $\|p_r^{\mathcal{I}^{(j)}}\|_2,\|q_r^{\mathcal{I}^{(j)}}\|_2=O(1/\sqrt{\ell_j})$, and therefore,
by Theorem~\ref{thm:l2-unif}, the tester Test-Identity-${L_2}(q_r^{\mathcal{I}^{(j)}}, p_r^{\mathcal{I}^{(j)}}, \ell_j, \eps_j, \delta_j)$,
on input a set of $m_j = O((\sqrt{\ell_j}/\eps_j^2) \cdot \log(1/\delta_j))$ samples from $q_r^{\mathcal{I}^{(j)}}$ and
$p_r^{\mathcal{I}^{(j)}}$
distinguishes the case
that $q_r^{\mathcal{I}^{(j)}} = p_r^{\mathcal{I}^{(j)}}$ from the case that  $\|q_r^{\mathcal{I}^{(j)}} - p_r^{\mathcal{I}^{(j)}}\|_2
\ge \gamma_j \eqdef \eps_j / \sqrt{\ell_j}$ with probability at least
$1-\delta_j$. From our choice of parameters it can be verified that $\max_j m_j \le m= O(\sqrt{k}/\eps^2)$, hence we can use the same sample $S_1, S_2$
as input to these testers for all $0 \le j \le j_0-1$. In fact, it is easy to see that $\littlesum_{j=0}^{j_0-1} m_j = O(m)$,
which implies that the overall algorithm runs in sample-linear time.
Since each tester in Step 3(b) has error probability $\delta_j$, by a union bound over all $j \in \{0, \ldots, j_0-1\}$, the total error probability is at most
$\littlesum_{j=0}^{j_0-1} \delta_j  \le (1/6) \cdot \littlesum_{j=0}^{\infty} 2^{-j} = 1/3.$
Therefore, with probability at least $2/3$ all the testers in Step~4(b) succeed.
We will henceforth condition on this ``good'' event, and establish the completeness and soundness properties of
the overall algorithm under this conditioning.


We start by establishing completeness. If $q = p$, then for any partition $\mathcal{I}^{(j)}$, $0 \le j \le j_0-1$, we have that
$q_r^{\mathcal{I}^{(j)}} = p_r^{\mathcal{I}^{(j)}} $. By our aforementioned conditioning, all testers in Step~3(b) will output ``YES'',
hence the overall algorithm will also output ``YES'', as desired.

We now proceed to establish the soundness of our algorithm.
Assuming that $\|q - p\|_{\mathcal{A}_k} \ge \eps$, we want to show that the algorithm
Test-Identity-$\mathcal{A}_k(q, n, \eps)$ outputs ``NO'' with probability at least $2/3$.
Towards this end, we prove the following structural lemma:
\begin{lemma} \label{lem:structural} For $C>0$ a sufficiently small constant,
if $\|q - p\|_{\mathcal{A}_k} \ge \eps$, there exists $j \in \Z_+$ with $0 \le j \le j_0-1$ such that
$\| q_r^{\mathcal{I}^{(j)}} - p_r^{\mathcal{I}^{(j)}}\|^2_2 \ge \gamma_j^2.$
\end{lemma}
Given the lemma, the soundness property of our algorithm follows easily.
Indeed, since all testers Test-{Identity}-${L_2}(q_r^{\mathcal{I}^{(j)}}, \ell_j, \eps_j, \delta_j)$ of Step~4(b) are successful by our conditioning,
Lemma \ref{lem:structural} implies that at least one of them outputs ``NO'', hence the overall algorithm will output ``NO''.
\end{proof}

\noindent The proof of Lemma \ref{lem:structural} is very similar to the analogous lemma in~\cite{DKN15a}.
For the same of completeness, it is given in the following subsection.

\subsection{Proof of Lemma~\ref{lem:structural}} \label{sec:ak-proof}

We claim that it is sufficient to take $C\leqslant 5 \cdot 10^{-6}$.
Thus, we are in the case where $n=2^{j_0-1}\cdot k$ and have
argued that it suffices to show that our algorithm works to distinguish 
$\mathcal{A}_k$-distance in this setting with $\eps_j= 10^{-5} \cdot \eps \cdot 2^{3j/8}$.

We make use of the following definition:
\begin{definition}
For $p$ and $q$ arbitrary distributions over $[n]$, we
define the {\em scale-sensitive-$L_2$ distance} between $q$ and $p$ to be
$$
\|q - p\|^2_{[k]} \eqdef \max_{\mathcal{I} = (I_1, \ldots, I_r) \in \mathbf{W}_{1/k}} \sum_{i=1}^r \frac{\discr^2(I_i)}{\width^{1/8}(I_i)}
$$
where $ \mathbf{W}_{1/k}$ is the collection of all interval partitions of $[n]$ into intervals of width at most $1/k$,
$
\discr(I)=|p(I)-q(I)|,
$ and $\width(I)$ is the number of bins in $I$ divided by $n$.
\end{definition}


The first thing we need to show is that if $q$ and $p$ have large $\mathcal{A}_k$ distance then they also have large scale-sensitive-$L_2$ distance.
Indeed, we have the following lemma:
\begin{lemma}\label{AkScaledL2Lem}
For $p$ and $q$ an arbitrary distributions over $[n]$, we have that
$$
\|q-p\|^2_{[k]} \geqslant \frac{\|q-p\|_{\mathcal{A}_k}^2}{(2k)^{7/8}}.
$$
\end{lemma}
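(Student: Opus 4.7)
Let $\varepsilon = \|q-p\|_{\mathcal{A}_k}$ and fix a partition $\mathcal{I} = (I_1, \ldots, I_k)$ attaining this value, so $\sum_{i=1}^k \discr(I_i) = \varepsilon$ and $\sum_i \width(I_i) = 1$. My plan is to refine $\mathcal{I}$ to a partition $\mathcal{J} \in \mathbf{W}_{1/k}$ and then lower-bound $\sum_{J\in\mathcal{J}} \discr(J)^2/\width(J)^{1/8}$ via Cauchy--Schwarz followed by Jensen's inequality. The refinement is the natural one: keep each $I_i$ with $w_i := \width(I_i) \leq 1/k$ as a single piece, and split each wider $I_i$ into $\lceil w_i k \rceil$ consecutive equal-width sub-intervals, each of width at most $1/k$.

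First I would bound the number of refined pieces: since $\lceil w_i k \rceil \leq w_i k + 1$ and $\sum_i w_i = 1$,
$$|\mathcal{J}| \;\leq\; \sum_{i=1}^k \lceil w_i k \rceil \;\leq\; \sum_{i=1}^k (w_i k + 1) \;=\; 2k.$$
The triangle inequality $\discr(I_i) = |p(I_i)-q(I_i)| \leq \sum_{J \subseteq I_i} |p(J)-q(J)| = \sum_{J \subseteq I_i} \discr(J)$ (refining an interval can only increase total unsigned discrepancy) then gives $\sum_{J \in \mathcal{J}} \discr(J) \geq \sum_i \discr(I_i) = \varepsilon$.

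The main step is Cauchy--Schwarz in the form
$$\varepsilon^2 \;\leq\; \Bigl(\sum_J \discr(J)\Bigr)^2 \;=\; \Bigl(\sum_J \tfrac{\discr(J)}{\width(J)^{1/16}} \cdot \width(J)^{1/16}\Bigr)^2 \;\leq\; \Bigl(\sum_J \tfrac{\discr(J)^2}{\width(J)^{1/8}}\Bigr) \Bigl(\sum_J \width(J)^{1/8}\Bigr),$$
combined with Jensen's inequality for the concave map $x \mapsto x^{1/8}$: since $\sum_J \width(J) = 1$ and $|\mathcal{J}| \leq 2k$,
$$\sum_J \width(J)^{1/8} \;\leq\; |\mathcal{J}| \cdot \Bigl(\tfrac{1}{|\mathcal{J}|}\Bigr)^{1/8} \;=\; |\mathcal{J}|^{7/8} \;\leq\; (2k)^{7/8}.$$
Rearranging and using that $\mathcal{J} \in \mathbf{W}_{1/k}$ yields $\|q-p\|^2_{[k]} \geq \sum_J \discr(J)^2/\width(J)^{1/8} \geq \varepsilon^2/(2k)^{7/8}$, as required.

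The only mildly delicate point is picking the right pairing inside Cauchy--Schwarz---namely $(\discr(J)/\width(J)^{1/16}) \cdot \width(J)^{1/16}$---so that the first squared factor is exactly the summand defining $\|\cdot\|^2_{[k]}$, while the second squared factor has the exponent $1/8$ which is amenable to Jensen. Once this pairing is spotted, the rest (the refinement, the count $|\mathcal{J}| \leq 2k$, the subadditivity of discrepancy) is routine.
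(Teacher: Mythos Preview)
Your proof is correct and follows essentially the same route as the paper: refine the optimal $\mathcal{A}_k$ partition into at most $2k$ pieces of width at most $1/k$, then apply Cauchy--Schwarz together with the concavity bound $\sum_J \width(J)^{1/8} \le (2k)^{7/8}$. The only difference is expository---you spell out the Cauchy--Schwarz pairing $\discr(J)/\width(J)^{1/16}\cdot \width(J)^{1/16}$ and the Jensen step explicitly, whereas the paper compresses both into a single displayed chain of inequalities.
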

\begin{proof}
Let $\eps=\|q - p\|_{\mathcal{A}_k}^2$. Consider the optimal $\mathcal{I}^{\ast}$ in the definition of the $\mathcal{A}_k$ distance. By further subdividing intervals of width more than $1/k$ into smaller ones, we can obtain a new partition, $\mathcal{I}' = (I_i')_{i=1}^{s}$, of cardinality $s \le 2k$ all of whose parts have width at most $1/k$. Furthermore, we have that $\sum_i \discr(I'_i) \geqslant \epsilon$. Using this partition to bound from below $\|q-p\|^2_{[k]}$, by Cauchy-Schwarz we obtain that
\begin{align*}
\|q-p\|^2_{[k]} & \geqslant \sum_i \frac{\discr^2(I'_i)}{\width(I'_i)^{1/8}}\\
& \geqslant \frac{\left(\sum_i \discr(I'_i) \right)^2}{\sum_i \width(I'_i)^{1/8}}\\
& \geqslant \frac{\epsilon^2}{2k (1/(2k))^{1/8}} \\ & = \frac{\epsilon^2}{(2k)^{7/8}}.
\end{align*}
\end{proof}

The second important fact about the scale-sensitive-$L_2$ distance is that if it is large then one of the partitions considered in our algorithm will produce a large $L_2$ error.
\begin{proposition}\label{scaledL2IntProp}
Let $p$ and $q$ be distributions over $[n]$. Then we have that
\begin{equation}\label{scaledDistErrsEqn}
\|q-p\|_{[k]}^2 \leqslant 10^8 \sum_{j=0}^{j_0-1} \sum_{i=1}^{2^j \cdot k} \frac{\discr^2(I_i^{(j)})}{\width^{1/8}(I_i^{(j)})}.
\end{equation}
\end{proposition}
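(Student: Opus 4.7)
The plan is to reduce the left-hand side to the right-hand side via a dyadic decomposition of the partition achieving the maximum in the definition of $\|q-p\|_{[k]}^2$. Let $\mathcal{I}^\ast = (I^\ast_1, \ldots, I^\ast_r) \in \mathbf{W}_{1/k}$ be such an optimal partition, so each $I^\ast_\ell$ has $\width(I^\ast_\ell) \le 1/k$. The family $\{\mathcal{I}^{(j)}\}_{j=0}^{j_0-1}$ forms a binary tree in which level $j=0$ consists of intervals of width $1/k$ and each subsequent level halves the width. For each $\ell$, I will write $I^\ast_\ell$ as an (essentially) disjoint union of intervals $J_1^{(\ell)}, J_2^{(\ell)}, \ldots$ drawn from this tree, using the standard dyadic decomposition: at most two pieces are drawn from each level $j$, and only levels $j \ge j_{\min}(\ell)$ are used, where $j_{\min}(\ell) = \lceil \log_2(1/(k \cdot \width(I^\ast_\ell))) \rceil$ is the first level whose nodes are small enough to fit inside $I^\ast_\ell$.

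With this decomposition in hand, subadditivity of $\discr$ gives $\discr(I^\ast_\ell) \le \sum_t \discr(J_t^{(\ell)})$, and Cauchy--Schwarz yields
\begin{equation*}
\discr^2(I^\ast_\ell) \le \left( \sum_t \width^{1/8}(J_t^{(\ell)}) \right) \left( \sum_t \frac{\discr^2(J_t^{(\ell)})}{\width^{1/8}(J_t^{(\ell)})} \right).
\end{equation*}
The first factor is controlled by a geometric series: using at most two intervals per level, each of width $1/(k 2^j)$,
\begin{equation*}
\sum_t \width^{1/8}(J_t^{(\ell)}) \le \frac{2}{k^{1/8}} \sum_{j \ge j_{\min}(\ell)} 2^{-j/8} \le C_0 \cdot \width^{1/8}(I^\ast_\ell),
\end{equation*}
since $2^{-j_{\min}(\ell)/8} = O((k \cdot \width(I^\ast_\ell))^{1/8})$ and the ratio $2^{-1/8}$ makes the tail geometric. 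Dividing by $\width^{1/8}(I^\ast_\ell)$ gives the per-interval bound
\begin{equation*}
\frac{\discr^2(I^\ast_\ell)}{\width^{1/8}(I^\ast_\ell)} \le C_0 \sum_t \frac{\discr^2(J_t^{(\ell)})}{\width^{1/8}(J_t^{(\ell)})}.
\end{equation*}

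Summing over $\ell$, and using the crucial fact that the $I^\ast_\ell$ are pairwise disjoint (so the dyadic pieces $\{J_t^{(\ell)}\}_{\ell,t}$ are themselves pairwise disjoint nodes of the tree $\{I_i^{(j)}\}$), I obtain
\begin{equation*}
\|q-p\|_{[k]}^2 = \sum_\ell \frac{\discr^2(I^\ast_\ell)}{\width^{1/8}(I^\ast_\ell)} \le C_0 \sum_\ell \sum_t \frac{\discr^2(J_t^{(\ell)})}{\width^{1/8}(J_t^{(\ell)})} \le C_0 \sum_{j=0}^{j_0-1} \sum_{i=1}^{\ell_j} \frac{\discr^2(I_i^{(j)})}{\width^{1/8}(I_i^{(j)})},
\end{equation*}
which is the desired inequality with absolute constant $C_0 \ll 10^8$.

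The step I expect to be the main obstacle is a mildly fussy boundary issue in the dyadic decomposition: an arbitrary $I^\ast_\ell \subseteq [n]$ need not align with the finest dyadic level (whose nodes have length $\approx 2$), so strictly speaking one must either extend $I^\ast_\ell$ by $O(1)$ bins to align or allow a constant number of extra pieces per level, both of which merely inflate the constant $C_0$. The conceptual point is that the exponent $1/8$ in the denominator of the definition of $\|\cdot\|_{[k]}^2$ is precisely what makes the geometric sum $\sum 2^{-j/8}$ convergent with the telescoping factor of $\width^{1/8}(I^\ast_\ell)$, which is what dictates the seemingly arbitrary choice of scale-sensitive $L_2$ metric in the first place.
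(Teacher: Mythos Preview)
Your argument is correct and takes a genuinely different route from the paper's. The paper also fixes an optimal partition $\mathcal{J}$ and proves a per-interval inequality, but instead of a full dyadic decomposition it picks for each $J\in\mathcal{J}$ a \emph{single} level $j$ with $1/(2^j k)\asymp\width(J)/10^8$, so that $J$ is covered by $O(10^8)$ dyadic intervals $I_i^{(j)}\subset J$ plus two short end-pieces $R_1,R_2$. The crux there is that the end-pieces are handled using the \emph{optimality} of $\mathcal{J}$: if $\discr(R_i)>\discr(J)/3$ then splitting $J$ into $R_i$ and its complement would increase $\sum \discr^2/\width^{1/8}$, contradicting maximality. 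Your proof never invokes optimality; Cauchy--Schwarz plus the geometric series $\sum_j 2^{-j/8}<\infty$ does all the work, and this already shows the bound holds for \emph{every} $\mathcal{I}^\ast\in\mathbf{W}_{1/k}$, not just the maximizer. Your constant $C_0=O\bigl(1/(1-2^{-1/8})\bigr)$ is also far smaller than $10^8$. What the paper's approach buys is that it avoids the dyadic-decomposition machinery entirely and makes transparent why a single scale suffices once you know the end-effects are negligible.

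On the ``fussy boundary issue'': you are right that it is the only delicate point, but your suggested fix of extending $I^\ast_\ell$ by $O(1)$ bins is not quite enough as stated, since shifting an endpoint can change $\discr(I^\ast_\ell)$ arbitrarily. The clean resolution is simply that the finest level of the dyadic tree must consist of singletons; with $n=k\cdot 2^{j_0}$ this is level $j_0$, and then every interval in $[n]$ decomposes exactly. The paper itself tacitly assumes this (its small-$|J|$ case writes ``$J$ is divided into $t$ intervals of size $1$''), so the off-by-one in the upper summation limit is a wrinkle in the paper's setup rather than in your argument.
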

\begin{proof}
Let $\mathcal{J}  \in \mathbf{W}_{1/k}$ be the optimal partition used when computing the scale-sensitive-$L_2$ distance $\|q-p\|_{[k]}$. In particular, it is a partition into intervals of width at most $1/k$ so that $\sum_i \frac{\discr^2(J_i)}{\width(J_i)^{1/8}}$ is as large as possible. To prove \eqref{scaledDistErrsEqn}, we prove a notably stronger claim. In particular, we will prove that for each interval $J_\ell\in\mathcal{J}$
\begin{equation}\label{refinedSumEqn}
\frac{\discr^2(J_\ell)}{\width^{1/8}(J_\ell)} \leqslant 10^8\sum_{j=0}^{j_0-1} \sum_{i: I_i^{(j)}\subset J_\ell} \frac{\discr^2(I_i^{(j)})}{\width^{1/8}(I_i^{(j)})}.
\end{equation}
Summing over $\ell$ would then yield $\|q-p\|_{[k]}^2$ on the left hand side and a strict subset of the terms from \eqref{scaledDistErrsEqn} on the right hand side. From here on, we will consider only a single interval $J_\ell$. For notational convenience, we will drop the subscript and merely call it $J$.

First, note that if $|J|\leqslant 10^8$, then this follows easily from considering just the sum over $j=j_0-1$. Then, if $t=|J|$, $J$ is divided into $t$ intervals of size $1$. The sum of the discrepancies of these intervals equals the discrepancy of $J$, and thus, the sum of the squares of the discrepancies is at least $\discr^2(J)/t$. Furthermore, the widths of these subintervals are all smaller than the width of $J$ by a factor of $t$. Thus, in this case the sum of the right hand side of \eqref{refinedSumEqn} is at least $1/t^{7/8}\geqslant \frac{1}{10^7}$ of the left hand side.

Otherwise, if $|J|>10^8$, we can find a $j$ so that $\width(J)/10^8 < 1/(2^j\cdot k) \leqslant 2\cdot \width(J)/10^8$. We claim that in this case Equation \eqref{refinedSumEqn} holds even if we restrict the sum on the right hand side to this value of $j$. Note that $J$ contains at most $10^8$ intervals of $\mathcal{I}^{(j)}$, and that it is covered by these intervals plus two narrower intervals on the ends. Call these end-intervals $R_1$ and $R_2$. We claim that $\discr(R_i)\leqslant \discr(J)/3$. This is because otherwise it would be the case that
$$
\frac{\discr^2(R_i)}{\width^{1/8}(R_i)} > \frac{\discr^2(J)}{\width^{1/8}(J)}.
$$
(This is because $(1/3)^2\cdot (2/10^8)^{-1/8} > 1$.)
This is a contradiction, since it would mean that partitioning $J$ into $R_i$ and its complement would improve the sum defining $\|q-p\|_{[k]}$, which was assumed to be maximum. This means that the sum of the discrepancies of the $I_i^{(j)}$ contained in $J$ must be at least $\discr(J)/3$, so the sum of their squares is at least $\discr^2(J)/(9\cdot 10^8)$. On the other hand, each of these intervals is narrower than $J$ by a factor of at least $10^8/2$, thus the appropriate sum of $\frac{\discr^2(I_i^{(j)})}{\width^{1/8}(I_i^{(j)})}$ is at least $\frac{\discr^2(J)}{10^8\width^{1/8}(J)}$. This completes the proof.
\end{proof}

We are now ready to prove Lemma \ref{lem:structural}.
\begin{proof}
If $\|q-p\|_{\mathcal{A}_k}\geqslant \eps$ we have by Lemma \ref{AkScaledL2Lem} that
$$\|q-p\|_{[k]}^2 \geqslant \frac{\epsilon^2}{(2k)^{7/8}}.$$ By Proposition \ref{scaledL2IntProp}, this implies that
\begin{align*}
\frac{\eps^2}{(2k)^{7/8}} &\leqslant 10^8\sum_{j=0}^{j_0-1} \sum_{i=1}^{2^j\cdot k} \frac{\discr^2(I_i^{(j)})}{\width^{1/8}(I_i^{(j)})}\\
& = 10^8 \sum_{j=0}^{j_0-1} (2^{j}k)^{1/8} \|q^{\mathcal{I}^{(j)}}-{p^{\mathcal{I}^{(j)}}}\|_2^2.
\end{align*}
Therefore,
\begin{equation}\label{L2SumEqn}
\sum_{j=0}^{j_0-1} 2^{j/8} \|q^{\mathcal{I}^{(j)}}-{p^{\mathcal{I}^{(j)}}}\|_2^2 \geqslant 5\cdot 10^{-9} \eps^2/k.
\end{equation}
On the other hand, if $\|q^{\mathcal{I}^{(j)}}-{p^{\mathcal{I}^{(j)}}}\|_2^2$ were at most $10^{-10}2^{-j/4}\epsilon^2/k$ for each $j$, then the sum above would be at most
$$
10^{-10}\epsilon^2/k \sum_j 2^{-j/8} < 5\cdot 10^{-9} \eps^2/k.
$$
This would contradict Equation \eqref{L2SumEqn}, thus proving that $\|q^{\mathcal{I}^{(j)}}-U_{\ell_j}\|_2^2\geqslant10^{-10}2^{-j/4}\epsilon^2/k$ for at least one $j$,
proving Lemma~ \ref{lem:structural}.
\end{proof}

\end{document}